\pgfplotsset{compat=1.18} 
\titlespacing{\section}{0pt}{*0.8}{*0.8}
\titlespacing{\subsection}{0pt}{*0.8}{*0.8}
\titlespacing{\subsubsection}{0pt}{*0.8}{*0.8}
\newcommand{\bD}{ {\boldsymbol D} }
\newcommand{\bI}{ {\boldsymbol I} }
\newcommand{\bJ}{ {\boldsymbol J} }
\newcommand{\bx}{ {\boldsymbol x} }
\newcommand{\bX}{ {\boldsymbol X} }
\newcommand{\bkappa}{ {\boldsymbol \kappa} }
\newcommand{\bzero}{ {\boldsymbol 0} }
\newcommand{\given}{\,|\,}
\newtheorem{theorem}{Theorem}[section]
\newtheorem{lemma}[theorem]{Lemma}
\newenvironment{proof}[1][Proof]{\begin{trivlist}
\item[\hskip \labelsep {\bfseries #1}]}{\end{trivlist}}
\newenvironment{definition}[1][Definition]{\begin{trivlist}
\item[\hskip \labelsep {\bfseries #1}]}{\end{trivlist}}
\newcommand{\qed}{\nobreak \ifvmode \relax \else
      \ifdim\lastskip<1.5em \hskip-\lastskip
      \hskip1.5em plus0em minus0.5em \fi \nobreak
      \vrule height0.75em width0.5em depth0.25em\fi}
\title{Differentially Private Estimation of Weighted Average Treatment Effects for Binary Outcomes}
\author{{\small Sharmistha Guha}\\
{\small Assistant Professor, Department of Statistics, Texas A\&M University,}\\ {\small 3143 TAMU, College Station, TX 77843-3143, E-mail: sharmistha@tamu.edu}\\
{\small Jerome P. Reiter}\\
{\small Professor, Department of Statistical Science, Duke University, }\\ {\small 214 Old Chemistry Building, Durham, NC 27708-0251, E-mail: jreiter@duke.edu}}
\begin{document}
\maketitle
\begin{abstract}
In the social and health sciences, researchers often make causal inferences using sensitive variables. 
These researchers, as well as the data holders themselves, may be ethically and perhaps legally obligated to protect the confidentiality of study participants' data. It is now known that releasing any statistics, including estimates of causal effects, computed with confidential data leaks information about the underlying data values. Thus, analysts may desire to use causal estimators that can provably bound this information leakage.  
Motivated by this goal, we develop algorithms for estimating weighted average treatment effects with binary outcomes that satisfy the criterion of differential privacy.
We present theoretical results on the accuracy of several differentially private estimators of weighted average treatment effects. We illustrate the empirical performance of these estimators using simulated data and a causal analysis using data on education and income.
\end{abstract}
\noindent\emph{Keywords:} Causal; Confidentiality; Observational; Privacy; Propensity.

\section{Introduction}

Many causal studies measure sensitive binary outcome variables that data stewards are ethically 
obligated to keep confidential.  As hypothetical but realistic examples, the outcomes could be whether or not a patient  is cured of a stigmatized disease after treatment, a student passes or fails a test after receiving an intervention, or a person is employed or unemployed after job training.  In each of these cases, study participants would not want the data analyst to release information in a manner that reveals their individual outcomes.  Furthermore, causal studies typically include additional sensitive or identifying variables that analysts want to use as covariates; these too may be confidential.

Data stewards routinely put controls in place to reduce risks of unintended disclosures. For example, often they restrict access to the confidential data to vetted data analysts.  However, researchers in data privacy have shown that every  statistic computed with confidential data leaks information about that data \citep{dwork2014algorithmic, exposed}. 
Given enough released information of sufficient accuracy, ill-intentioned users may be able to learn confidential information.  Thus, data stewards and analysts may seek to bound the information leakage when sharing results of confidential data analysis. 

One way to do so is to require 
methods to provide formal guarantees of confidentiality protection for any data release.  Among such methods, algorithms that satisfy differential privacy \citep{dwork2006differential, dmns} have become a gold standard.  Differential privacy is a mathematical criterion that encodes the idea that the released statistic should not be overly sensitive to the presence or absence of any particular individual. Researchers have developed differentially private algorithms for a variety of estimation tasks,  including significance tests \citep[e.g., ][]{barrientos2019differentially, balle2020hypothesis, pensia2023simple}, regression \citep[e.g., ][]{zhang2012functional, wang2015regression, fang2019regression, gaboardi2019locally}, and machine learning \citep[e.g., ][]{mivule2012towards, ji2014differential, abadi2016deep, triastcyn2020bayesian, zheng2020preserving, blanco2022critical}, among many others.  

The literature includes few approaches to differentially private causal inference, particularly in observational studies.  \citet{dorazio} present differentially private algorithms for estimating the differences of two means in matched pairs designs, which are common in causal inference.  \citet{leeetal} construct a differentially private inverse-probability weighting treatment effect estimator.  They first fit a differentially private propensity score model to determine the weights, and then add Gaussian noise under $(\epsilon, \delta)$ differential privacy to perturb the resulting weighted treatment effect estimate.  Their method does not provide standard errors or interval estimates for the treatment effect. \citet{niu} use partitions of the data to estimate parts of machine learning algorithms that are used for causal inference, and combine all the parts together at the end to arrive at a causal estimate.  Their method also does not provide standard errors or interval estimates.  Finally, \citet{ohnishi:awan} show that plugging-in differentially private versions of parts of causal estimators can result in biased estimates. They also provide a Bayesian version of causal inference under the local differential privacy model, i.e., when individuals perturb their own data before providing it to a central party that does computations.

In this article, we contribute to this literature by proposing differentially private algorithms for causal inference with binary outcomes. The algorithms can be used with a variety of weighted average treatment effect estimators. Unlike other approaches, they generate standard errors and confidence intervals for these estimators.  The basic idea is to split the data into $M$ disjoint groups, estimate causal effects and standard errors in each partition, aggregate the results, and add differentially private noise to the aggregated results.  We illustrate the approach using simulation studies and an analysis of data from the 1994 U.S.\ census in a study of the effect of education on earnings.

The remainder of this article is organized as follows.  In Section \ref{sec: causal_differential}, we review key concepts from causal inference and differential privacy.
In Section \ref{sec:DPwate}, we present the differentially private treatment effect point and interval estimators.  In Section \ref{sec:sim_study}, we present results of simulation studies showing the performance of the proposed methodology in various scenarios. In Section \ref{sec:real_data}, we  illustrate the methodology using the 1994 U.S.\ census data.
Finally, in Section \ref{sec:conc}, we conclude with a discussion.

\section{Review of Causal Inference and Differential Privacy}\label{sec: causal_differential}
\noindent 
In Section \ref{sec: causal_inference}, 
we introduce the weighted average treatment effect (WATE) and methods for estimating the WATE. In Section \ref{sec: differential_privacy}, we review differential privacy and several algorithms that satisfy it.  Throughout the article, we suppose sample sizes to be large enough that  large sample approximations to sampling distributions are empirically valid.

\subsection{Overview of WATE Estimation}\label{sec: causal_inference}
We use the potential outcome framework for causal inference \citep{rubin1974estimating}. Let $z=1$ and $z=0$ indicate assignment to the treatment and control conditions, respectively. Let $y$ be an outcome variable.  We seek to learn the causal effect of $z$ on $y$. 
For any unit in the study population, we conceive of two potential outcomes, $y(1)$ and $y(0)$, corresponding to the outcome measured when $z=1$ and $z=0$, respectively. For any unit, we observe only one of $y(1)$ and $y(0)$, which we write as $y=z y(1)+(1-z)y(0)$. We consider $y(0)$ and $y(1)$ as binary outcomes, i.e., $y(0),y(1)\in\{0,1\}$. We assume the stable unit treatment value assumption (SUTVA) which contains two sub-assumptions, no interference between units (i.e., the treatment applied to one unit does not affect the outcome for another unit) and no different versions of a treatment \citep{rubin1974estimating}. We also define the $p\times 1$ vector of covariates $\bx$, which are variables unaffected by treatment assignment $z$.  We assume that $0<P(z=1|\bx)<1$, i.e., the probability of assigning treatment or control is positive for every unit. Finally, we assume  
strong ignorability 
\citep{rosenbaum1983central} so that the vector of potential outcomes $(y(0), y(1))$ is independent of $z$ given $\bx$.

Many causal inference procedures utilize propensity scores $P(z=1|\bx)$, i.e., 
the probability of assignment to the treatment group given the covariates $\bx$. 
As shown by \cite{rosenbaum1983central}, the treatment assignment is independent of $\bx$ given $P(z=1|\bx)$ under SUTVA and strong ignorability. Propensity scores are typically  estimated using binary regressions of $z$ on $\bx$.  These estimated scores are used in a variety of causal estimators, especially in defining weighted sums of the outcomes as treatment effect estimators, 
as we use here. In what follows, we  refer to estimated propensity scores using $e(\bx)$.

To compare outcomes under treatment and control, we define the conditional average controlled difference for a given $\bx$,  \begin{align}\label{causal_estimand}
\tau(\bx)=E[y|z=1,\bx]-E[y|z=0,\bx].
\end{align}
Under strong ignorability, $E[y(z)|\bx]=E[y|\bx,z]$, so that $\tau(\bx)$ in (\ref{causal_estimand}) becomes the average treatment effect conditional on $\bx$, i.e.,  $\tau(\bx)=E[y(1)-y(0)|\bx]$. Typically, 
the (potential) outcomes are compared not for a single $\bx$; rather, they are  
averaged over a hypothesized target distribution of the covariates. The choice of the distribution corresponds to the region of the covariate space for the target population of interest. For example, if one seeks to estimate the effect of the treatment on the treated, the relevant covariate distribution is that of the treated cases.

Let the marginal density of $\bx$ be $f(\bx)$, defined with respect to a base measure $\Delta(\bx)$ (a product of counting measure for categorical variables and Lebesgue measure
for continuous variables).  
For many common target populations in causal inference, the distribution of the covariates can be represented as $g(\bx)=f(\bx)t(\bx)$, where $t(\cdot)$ is a pre-specified function of $\bx$; some examples are provided below. Using this expression, we define a general class of estimands by the expectation of the conditional average controlled difference over the target population,
\begin{align}\label{tau1}
\tau =\frac{\int \tau(\bx)t(\bx)f(\bx)\Delta(d\bx)}{\int t(\bx)f(\bx)\Delta(d\bx)}.
\end{align}

The class of estimators defined in (\ref{tau1}) is referred to as the weighted average treatment effect (WATE) for causal comparisons \citep{hirano2003efficient}.
Specification of $t(\cdot)$ defines the target population and WATE estimands. Here, we consider three different WATEs. 
When $t(\bx)=1$, the corresponding target population is the combined (treated and control) population, and the estimand is the average treatment effect (ATE). When $t(\bx) = e(\bx)$, the target population is the treated subpopulation,
and the estimand is the average treatment effect for the treated (ATT).  Finally, when $t(\bx) = 1-e(\bx)$, the target population is the control subpopulation,
and the estimand is the average treatment effect for the control (ATC). 

For any unit $i$ in a study where $i=1, \dots, n$, let their covariates be $\bx_i$, their treatment status $z_i$, and their outcome $y_i=z_i y_i(1)+(1-z_i)y_i(0)$.  The observed data are $\bD=\{(y_i,\bx_i,z_i):i=1, \dots, n\}$. We refer to the sampled covariate values as $\bX=\{\bx_1, \dots, \bx_n\}$. For $i=1, \dots, n$, let  $w_{1i} = t(\bx_i)/e(\bx_i)$, and let $w_{0i} = t(\bx_i)/(1-e(\bx_i))$. 
A consistent estimator of $\tau$ for any target population represented by the function $t(\cdot)$ is given by
\begin{align}\label{tau1est}
\hat{\tau} = \frac{\sum_{i=1}^n w_{1i} z_i y_i}{\sum_{i=1}^n w_{1i} z_i} - \frac{\sum_{i=1}^n w_{0i} (1-z_i) y_i}{\sum_{i=1}^n w_{0i} (1-z_i)}.
\end{align}
Expressions for $w_{0i}, w_{1i}$ and $\hat{\tau}$ corresponding to the ATE, ATT and ATC are given in Table~\ref{Tab_first}. We denote the treatment effect estimators as 
$\hat{\tau}_{ATE}, \hat{\tau}_{ATT}, \hat{\tau}_{ATC}$, respectively. 
\begin{table}[t]
\small
\begin{center}
\begin{tabular}{cccc} 
\hline
$\tau$ & ($w_{0i}$, $w_{1i}$) & Estimator & Estimated variance\\ 
\hline
& & &\\
ATE & \Big($\frac{1}{1-e(\bx_i)}$, $\frac{1}{e(\bx_i)}$\Big) & $\frac{\sum_{i=1}^n \frac{z_i y_i}{e(\bx_i)}}{\sum_{i=1}^n \frac{z_i}{e(\bx_i)}} - \frac{\sum_{i=1}^n  \frac{(1-z_i) y_i}{1-e(\bx_i)}}{\sum_{i=1}^n \frac{(1-z_i)}{1-e(\bx_i)}}$  & $\frac{\sum_{i=1}^n \left\{\frac{v_1(\bx_i)}{e(\bx_i)} + \frac{v_0(\bx_i)}{1-e(\bx_i)}\right\}}{n^2}$\\
& & &\\
ATT &\Big($\frac{e(\bx_i)}{1-e(\bx_i)}$, 1\Big)  & $\frac{\sum_{i=1}^n z_i y_i}{\sum_{i=1}^n z_i} - \frac{\sum_{i=1}^n  \frac{(1-z_i) y_i e(\bx_i)}{1-e(\bx_i)}}{\sum_{i=1}^n \frac{(1-z_i)e(\bx_i)}{1-e(\bx_i)}}$  & $\frac{\sum_{i=1}^n e(\bx_i)^2 \left\{\frac{v_1(\bx_i)}{e(\bx_i)} + \frac{v_0(\bx_i)}{1-e(\bx_i)}\right\}}{\left[\sum_{i=1}^n e(\bx_i))\right]^2}$\\ 
& & & \\
ATC &\Big(1, $\frac{1-e(\bx_i)}{e(\bx_i)}$\Big) & $\frac{\sum_{i=1}^n \frac{z_i y_i(1-e(\bx_i))}{e(\bx_i)}}{\sum_{i=1}^n \frac{z_i (1-e(\bx_i))}{e(\bx_i)}} - \frac{\sum_{i=1}^n (1-z_i) y_i}{\sum_{i=1}^n (1-z_i)}$  & $\frac{\sum_{i=1}^n (1-e(\bx_i))^2 \left\{\frac{v_1(\bx_i)}{e(\bx_i)} + \frac{v_0(\bx_i)}{1-e(\bx_i)}\right\}}{\left[\sum_{i=1}^n (1-e(\bx_i))\right]^2}$\\ 
& & & \\
\hline
\end{tabular}
\caption{The expressions for $w_{0i}$, $w_{1i}$, the estimated treatment effect and its estimated variance for different choices of target population represented by $t(\bx_i)$.  ATE is the average treatment effect for everyone ($t(\bx)=1$). ATT is  the average treatment effect for the treated  ($t(\bx) = e(\bx)$). ATC is the average treatment effect for the controls ($t(\bx) = 1-e(\bx)$).  
}\label{Tab_first}
\end{center}
\end{table}

For any of these estimators, which we write generically as $\hat{\tau}$, we can approximate the variance $V[\hat{\tau}]$ using the decomposition, $V[\hat{\tau}] = E_{\bx} V[\hat{\tau}| \bX] + V_{\bx} E[\hat{\tau}| \bX]$.
\cite{li2018balancing} derive the component of variation due to residual (model) variation conditional on $\bX$. Specifically, if 
$\mbox{V}[y(1)|\bX]=v_1(\bx)$ and $\mbox{V}[y(0)|\bX]=v_0(\bx)$ denote the variances of the outcome given the covariates for  the treated and control groups, respectively, \cite{li2018balancing}  show that $E_{\bx} V[\hat{\tau}| \bX]$ 
can be approximated by 
\begin{align}\label{variance}
V=\frac{1}{n}\int t(\bx)^2\left\{\frac{v_1(\bx)}{e(\bx)}+\frac{v_0(\bx)}{(1-e(\bx))}\right\}f(\bx)\Delta(\bx)/\left\{\int t(\bx)f(\bx)\Delta(\bx)\right\}^2,
\end{align}
when the sample size $n$ is large.  \cite{imbens2004nonparametric} shows that  
$E_{\bx} V[\hat{\tau}| \bX]$ is typically much larger than $V_{\bx} E[\hat{\tau}| \bX]$. Therefore, the general strategy is to approximate $V[\hat{\tau}]$ by (\ref{variance}). The expression in (\ref{variance}) can be  empirically approximated by,
\begin{align}\label{vareq}
\hat{V}=\frac{\frac{1}{n}\left[\frac{1}{n}\sum_{i=1}^n t(\bx_i)^2 \left\{\frac{v_1(\bx_i)}{e(\bx_i)} + \frac{v_0(\bx_i)}{1-e(\bx_i)}\right\}\right]}{\left[\frac{1}{n}\sum_{i=1}^n t(\bx_i))\right]^2} 
= \frac{\left[\sum_{i=1}^n t(\bx_i)^2 \left[\frac{v_1(\bx_i)}{e(\bx_i)} + \frac{v_0(\bx_i)}{1-e(\bx_i)}\right]\right]}{\left[\sum_{i=1}^n t(\bx_i))\right]^2}.
\end{align}
The estimated variance $\hat{V}$ corresponding to $\hat{\tau}_{ATE}, \hat{\tau}_{ATT}$ and $\hat{\tau}_{ATC}$ are denoted $\hat{V}_{ATE}, \\\hat{V}_{ATT}$ and $\hat{V}_{ATC},$ respectively. Their expressions are in Table~\ref{Tab_first}. Note that $\hat{V}$ is a consistent estimator of $V$.

With large $n$, 95\% confidence intervals for $\tau$ are constructed based on a large-sample normal approximation, $(\hat{\tau}-1.96\sqrt{\hat{V}},\hat{\tau}+1.96\sqrt{\hat{V}})$. 

In some settings, values of $e(\bx_i)$ can be close to zero or one, which can result in inflated variances.  In such cases, one remedy is to replace $e(\bx)$ with a truncated propensity score, given by
\begin{align}\label{eq:truncated_propen}
e^{T}(\bx)=\left\{\begin{array}{cc}
1-a & \mbox{if}\:\:e(\bx_i)>1-a\\
e(\bx) & \mbox{if}\:\:a\leq e(\bx_i)\leq 1-a\\
a & \mbox{if}\:\:e(\bx_i)<a\\
\end{array}
\right.
\end{align}
for some user-defined parameter $0< a < 1/2$.  The value of $a$ typically is chosen to be small, so that truncation affects only the few units with $e(\bx_i)$ near zero or one. 
Inferences are based on the expressions in Table \ref{Tab_first} replacing $e(\bx_i)$ with $e^T(\bx_i)$.  We denote the causal estimators based on truncated propensity scores 
as  $\hat{\tau}_{ATE}^T, \hat{\tau}_{ATT}^T,$ and $\hat{\tau}_{ATC}^T$, with 
corresponding variance estimates  $\hat{V}_{ATE}^T$, $\hat{V}_{ATT}^T$, and $\hat{V}_{ATC}^T$. 

Trimming estimated propensity scores is a common strategy in the causal inference literature,  
especially to avoid large variance and poor finite-sample performance due to large values of $w_{1i}$ or $w_{0i}$ \citep{kang2007demystifying}. The idea was first discussed in medical applications \citep{vincent2002anemia,grzybowski2003mortality,kurth2006results} and  formalized by \cite{crump2009dealing}.

\subsection{Differential Privacy: Overview of Key Concepts}\label{sec: differential_privacy}
\begin{definition}\label{DPdef}($\epsilon$-Differential Privacy) An algorithm $\mathcal{P}$ satisfies $\epsilon-$differential privacy (denoted as $\epsilon$-DP), if for any pair of neighboring databases $(\bD,\bD')$, and any non-negligible measurable set $S\subseteq \mbox{range}(\mathcal{P})$, 
$P(\mathcal{P}(\bD)\in S) \leq \exp(\epsilon) P(\mathcal{P}(\bD')\in S)$. 
\end{definition}
Thus, $\mathcal{P}$ satisfies $\epsilon$-DP when the distributions of its outputs are similar for any two
neighboring databases, where the factor $\exp(\epsilon)$ defines the similarity. The $\epsilon$, known as the privacy loss  budget, controls the degree of confidentiality protection provided by $\mathcal{P}$, with greater protection guarantees implied by lower values. $\epsilon$-DP is a strong criterion, because an attacker who has access to
all of $\bD$ except any one row learns little from $\mathcal{P}(\bD)$ about the values in that unknown row when
$\epsilon$ is small \citep{ barrientos2019differentially}. 

The definition of $\epsilon$-DP satisfies several desirable properties.  Let $\mathcal{P}_1$ and $\mathcal{P}_2$ be $\epsilon_1$-DP and $\epsilon_2$-DP algorithms. The first is sequential composition: for any database $\bD$, release of both $\mathcal{P}_1(\bD)$ and $\mathcal{P}_2(\bD)$ satisfies $(\epsilon_1+\epsilon_2)$-DP. This means that we are able to calculate the total privacy leakage from releasing multiple statistics. The second is parallel composition.  Let $\bD_1$ and $\bD_2$ be two data files on disjoint sets of individuals. Release of both $\mathcal{P}_1(\bD_1)$ and $\mathcal{P}_2(\bD_2)$ satisfies max$\{\epsilon_1,\epsilon_2\}$-DP. The third is the post-processing property. For any algorithm $\mathcal{P}_2$ not depending on $\bD$, releasing $\mathcal{P}_2(\mathcal{P}_1(\bD))$ for any $\bD$ satisfies $\epsilon_1$-DP. In other words, post-processing the output of an $\epsilon_1$-DP algorithm does not imply any additional privacy loss.

A commonly used method to ensure $\epsilon$-DP is the Laplace mechanism. Let $q(\bD)$ be a function that takes $\bD$ as an input and outputs some statistic in $\mathbb{R}^k$. For example, $q$ might sum the elements of one of the columns in $\bD$.  We define the global sensitivity $s(q,||\cdot||)=\max\limits_{\bD,\bD',d(\bD,\bD')=1}||q(\bD)-q(\bD')||$, where $d(\bD,\bD')=1$ implies that the two databases differ by only one row, and $||\cdot||$ is a norm specific to the context.  The Laplace mechanism computes 
$LM(\bD)=q(\bD)+\bkappa$, where $\bkappa$ is a $k\times 1$ vector of independent draws from a Laplace distribution with density
$g(x|\lambda)=(2\lambda)^{-1}\exp(-|x|/\lambda)$, where
$\lambda=s(q,||\cdot||)/\epsilon$ \citep{dwork2006differential}. In Section \ref{sec:DPwate}, we use the Laplace mechanism to construct causal estimators and associated 95\% confidence intervals satisfying $\epsilon$-DP.

As part of our developments in subsequent sections, we use the \emph{subsample and aggregate algorithm}  \citep{nissim2007smooth} to satisfy $\epsilon$-DP.  The algorithm consists of a sampling step and an aggregating step. In the sampling step, we partition the confidential data  $\bD$ into $M$ disjoint subsets ${\bD_1, \dots, \bD_M}$ and compute $q(\bD_m)$ in each $\bD_m$. In the aggregation step, we compute the average  $q(\bD_1, \dots, \bD_M) = \sum_{m=1}^M q(\bD_m)/M$. For many $q$, any single observation affects the output from at most one of the partitions, i.e., the one it is randomly assigned to.  For such $f$, 
the  global sensitivity of $q(\bD_1, \dots, \bD_M)$ generally is $1/M$ times that of $q(\bD)$. Using this sensitivity, we apply the Laplace mechanism to $q(\bD_1, \dots, \bD_M)$  to create the differentially private statistic.  The reduced  
sensitivity decreases the variance of the noise distribution, which in turn offers potential for increased accuracy of released results.

\section{Differentially Private Estimation of WATE}\label{sec:DPwate}

We now construct a 
differentially private estimator of $\tau$  and its associated 95\% interval estimate for the three target populations reviewed in Table \ref{Tab_first}.  Our general strategy is to (i) find expressions for global sensitivities of the point and variance estimators, (ii) use the subsample and aggregate algorithm with these global sensitivities to generate noisy versions of the point and variance estimates, and (iii) apply a Bayesian inferential procedure to turn these noisy quantities into an interval estimate for $\tau$.  

We begin by finding a global sensitivity of $\hat{\tau}$ for binary outcomes.  Determining a sharp bound on the sensitivity is tricky, since changing any one data point can change not only the outcomes but the propensity score estimation and hence weights for all individuals in \eqref{tau1est}.  Instead, we use the coarse bound shown in Lemma \ref{lem1}.

\begin{lemma}\label{lem1}
For $y_i(0),y_i(1)\in\{0,1\}$, if    
$0 < e(\bx_i) < 1$ for all $i=1, \dots, n$,  
the global sensitivity  of $\hat{\tau}$ in \eqref{tau1est} for the ATE, ATT and ATC is bounded by 2, i.e., 
$s(\hat{\tau},|\cdot|)\leq 2$. 
\end{lemma}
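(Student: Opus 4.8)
The plan is to exploit the fact that $\hat{\tau}$ is a difference of two weighted averages of binary outcomes, so that the estimator itself is confined to a fixed bounded interval no matter what the data are. Once the range of $\hat{\tau}$ is pinned down, the sensitivity bound is immediate: since $s(\hat{\tau},|\cdot|)$ is the supremum of $|\hat{\tau}(\bD)-\hat{\tau}(\bD')|$ over neighboring databases, it cannot exceed the diameter of the set of attainable values of $\hat{\tau}$. This sidesteps any need to track how a single changed row propagates through the refit propensity scores.

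First I would examine the first term $\sum_{i=1}^n w_{1i}z_iy_i / \sum_{i=1}^n w_{1i}z_i$. From Table~\ref{Tab_first}, in each of the ATE, ATT, and ATC cases we have $t(\bx_i)\in\{1,\,e(\bx_i),\,1-e(\bx_i)\}$, and since $0<e(\bx_i)<1$ by hypothesis, both $w_{1i}=t(\bx_i)/e(\bx_i)$ and $w_{0i}=t(\bx_i)/(1-e(\bx_i))$ are strictly positive. Hence the first term is a weighted average with nonnegative weights $w_{1i}z_i$ of the binary values $y_i$, and therefore lies in $[0,1]$. The identical argument applied over the control units ($z_i=0$) shows the second term, $\sum_{i=1}^n w_{0i}(1-z_i)y_i / \sum_{i=1}^n w_{0i}(1-z_i)$, also lies in $[0,1]$.

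Combining these, $\hat{\tau}$ is a difference of two quantities each in $[0,1]$, so $\hat{\tau}\in[-1,1]$ for every admissible database. The global sensitivity is then at most the diameter of this range, namely $1-(-1)=2$, which yields $s(\hat{\tau},|\cdot|)\leq 2$ and completes the argument.

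The main point to handle carefully is conceptual rather than computational: as flagged just before the statement, a sharp sensitivity bound would require controlling how one altered row perturbs the estimated $e(\bx_i)$ and hence all the weights simultaneously, which is intractable. The proof deliberately discards that structure and relies only on the crude observation that $\hat{\tau}$ is a contrast of two proportion-like quantities bounded in $[0,1]$. I would also make explicit the implicit well-definedness requirement that both denominators be nonzero—guaranteed whenever at least one treated and one control unit are present, together with the positivity of the weights established above—so that $\hat{\tau}$ is defined on both $\bD$ and $\bD'$ and the difference $|\hat{\tau}(\bD)-\hat{\tau}(\bD')|$ makes sense.
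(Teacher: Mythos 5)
Your proof is correct and follows essentially the same route as the paper's: both arguments pin down $\hat{\tau}(\bD)\in[-1,1]$ for every database by viewing each term as a weighted average of binary outcomes with positive weights (and nonzero denominators), and then bound the sensitivity by the resulting diameter, $s(\hat{\tau},|\cdot|)\leq|\hat{\tau}(\bD)|+|\hat{\tau}(\bD')|\leq 2$. Your explicit note about well-definedness of the denominators is a slightly cleaner statement of what the paper handles implicitly with its remark that $w_{1i}>0$ for at least one $z_i=1$ and $w_{0i}>0$ for at least one $z_i=0$.
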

\begin{proof}
For any $\bD$, we have   
\begin{align}
& |\hat{\tau}(\bD)|\leq  \left|\frac{\sum_{i = 1}^{n}w_{1i} z_i y_i(1)}{\sum_{i = 1}^{n} w_{1i} z_i}-\frac{\sum_{i = 1}^{n}w_{0i} (1-z_i) y_i(0)}{\sum_{i = 1}^{n} w_{0i} (1-z_i)}\right|\nonumber\\
& \leq\max\{\max_{i=1:n}y_i(1)- \min_{i=1:n}y_i(0),\max_{i=1:n}y_i(0)- \min_{i=1:n}y_i(1)\}\leq 1,
\end{align}
since $y_i(0),y_i(1)\in\{0,1\}$, $w_{0i}>0$ for at least one $z_i=0$, and $w_{1i}> 0$ for at least one $z_i=1$. Thus, for any two neighboring datasets $\bD$ and $\bD'$, we have  
$s(\hat{\tau},|\cdot|)\leq |\hat{\tau}(\bD)|+|\hat{\tau}(\bD')| \leq 2$.
\end{proof}

To ensure all $0<w_{0i},w_{1i}<\infty$ for $i=1, \dots, n$, we use truncated propensity scores for the differentially private WATE.  The truncation limit $a$ is set before looking at values in $\bD$, so that $a$ is not data-dependent.
Under truncation, 
the global sensitivities for the estimated treatment effects remain bounded by two. The truncation also facilitates  bounding the sensitivities for the estimated variances, as we now show.
\begin{theorem}\label{th:variance}
For $y_i(0),y_i(1)\in\{0,1\}$, bounds on the global sensitivities for $\hat{V}_{ATE}^T$, $\hat{V}_{ATT}^T$ and $\hat{V}_{ATC}^T$ 
are as follows: 
 $s(\hat{V}_{ATE}^T,|\cdot|)\leq \frac{1}{an}$;
 $\,\,\,s(\hat{V}_{ATT}^T,|\cdot|)\leq \frac{1}{2a^2n}$; and
 $\,\,\,s(\hat{V}_{ATC}^T,|\cdot|)\leq \frac{1}{2a^2n}$. 
\end{theorem}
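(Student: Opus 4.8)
The plan is to mirror the coarse-bounding strategy of Lemma~\ref{lem1}: rather than tracking how a single changed row perturbs the re-estimated propensity scores, and hence every weight in \eqref{vareq}, I would bound each $\hat{V}^T$ \emph{uniformly} over all admissible datasets and then invoke nonnegativity. Concretely, since every $\hat{V}^T$ in \eqref{vareq} is nonnegative, for any neighboring pair $(\bD,\bD')$ one has $|\hat{V}^T(\bD)-\hat{V}^T(\bD')|\le \hat{V}^T(\bD)+\hat{V}^T(\bD')$, so $s(\hat{V}^T,|\cdot|)\le 2\sup_{\bD}\hat{V}^T(\bD)$. The point of truncation is precisely that it renders this supremum finite and explicit: for any dataset the truncated scores obey $a\le e^T(\bx_i)\le 1-a$, so I never need to reason about how re-estimation moves the scores, only that they remain in $[a,1-a]$.

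The second ingredient is a per-unit bound on the common factor $g_i := v_1(\bx_i)/e^T(\bx_i)+v_0(\bx_i)/(1-e^T(\bx_i))$. Because the outcomes are binary, the relevant conditional variances are Bernoulli variances, so $v_1(\bx_i),v_0(\bx_i)\le 1/4$. Maximizing $g_i$ over the numerators sets both equal to $1/4$, giving $g_i\le \tfrac{1}{4}\cdot\tfrac{1}{e^T(\bx_i)(1-e^T(\bx_i))}$, and since $x(1-x)$ is minimized on $[a,1-a]$ at the endpoints, $g_i\le \tfrac{1}{4a(1-a)}$. This single inequality drives all three bounds.

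For the ATE the denominator $\sum_i t(\bx_i)=n$ is data-free, so $\hat{V}_{ATE}^T=n^{-2}\sum_i g_i\le \tfrac{1}{4a(1-a)n}$; doubling and using $1-a>1/2$ gives $s(\hat V_{ATE}^T,|\cdot|)\le \tfrac{1}{2a(1-a)n}\le \tfrac{1}{an}$. For the ATT I would control the ratio by separating the two uses of $e^T$: in the numerator write $\sum_i e^T(\bx_i)^2 g_i\le \tfrac{1}{4a(1-a)}\sum_i e^T(\bx_i)^2$ and then $\sum_i e^T(\bx_i)^2\le (1-a)\sum_i e^T(\bx_i)$, while in the denominator use $\sum_i e^T(\bx_i)\ge na$ to write $\left(\sum_i e^T(\bx_i)\right)^2\ge na\sum_i e^T(\bx_i)$. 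The common factor $\sum_i e^T(\bx_i)$ cancels and the two copies of $(1-a)$ cancel, leaving $\hat{V}_{ATT}^T\le \tfrac{1}{4a^2 n}$ and hence $s(\hat V_{ATT}^T,|\cdot|)\le \tfrac{1}{2a^2 n}$. The ATC bound is identical after the symmetric substitution $e^T\mapsto 1-e^T$ (with $v_0,v_1$ interchanged), since then $1-e^T(\bx_i)\le 1-a$ and $\sum_i(1-e^T(\bx_i))\ge na$ play the roles above.

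The step I expect to be delicate is the ATT/ATC ratio bound: a naive estimate, bounding the numerator factor $e^T(\bx_i)^2$ by $(1-a)^2$ and the denominator by $(na)^2$ separately, overcounts and produces a spurious $a^{-3}$. The clean $a^{-2}$ appears only if one factor of $\sum_i e^T(\bx_i)$ is retained and cancelled between numerator and denominator, so that the power of $(1-a)$ from $e^T(\bx_i)^2\le (1-a)\,e^T(\bx_i)$ exactly offsets the $(1-a)$ in the denominator of the $g_i$ bound. Identifying that cancellation, rather than bounding each piece to the hilt, is the crux.
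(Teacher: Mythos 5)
Your proof is correct and follows essentially the same route as the paper's: both arguments bound $\sup_{\bD}\hat{V}^T(\bD)$ uniformly and then double via nonnegativity, using $v_0(\bx_i),v_1(\bx_i)\le 1/4$ and $a\le e^T(\bx_i)\le 1-a$, and for the ATT/ATC both retain one factor of $\sum_i e^T(\bx_i)$ (resp.\ $\sum_i(1-e^T(\bx_i))$) to cancel against the squared denominator together with $\sum_i e^T(\bx_i)\ge na$ --- exactly the cancellation you identify as the crux. Your per-unit bound $g_i\le\tfrac{1}{4a(1-a)}$ is marginally sharper than the paper's $\tfrac{1}{2a}$ in the ATE step, but the final bounds coincide.
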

\begin{proof}
For any $\bD$, we have 
\begin{align}
& \hat{V}_{ATE}^T(\bD) = \frac{\sum_{i=1}^n\left[\frac{v_1(\bx_i)}{e^T(\bx_i)}+\frac{v_0(\bx_i)}{1-e^T(\bx_i)}\right]}{n^2}
\leq \frac{n (\frac{2}{4a})}{n^2} = \frac{1}{2an},
\end{align}
where the  inequality follows because $v_0(\bx_i),v_1(\bx_i)\leq 1/4$ and $\min\{e^T(\bx_i),1-e^T(\bx_i)\}\geq a$. Hence, for any two neighboring  $\bD$ and $\bD'$, we have  
$s(\hat{V}_{ATE}^T,|\cdot|)\leq |\hat{V}_{ATE}^T(\bD)|+|\hat{V}_{ATE}^T(\bD')|\leq\frac{1}{an}$. Applying similar logic for $\hat{V}_{ATT}^T(\bD)$, we have  
\begin{align}
&\hat{V}_{ATT}^T(\bD) =\frac{\sum_{i=1}^n e^T(\bx_i)^2\left[\frac{v_1(\bx_i)}{e^T(\bx_i)}+\frac{v_0(\bx_i)}{1-e^T(\bx_i)}\right]}{\left\{\sum_{i=1}^n e^T(\bx_i)\right\}^2}
 \leq \frac{\sum e^T(\bx_i) \left[1/4 + \frac{(1-a)}{4a}\right]}{\left\{\sum_{i=1}^n  e^T(\bx_i)\right\}^2}
= \frac{1/4 + \left[\frac{(1-a)}{4a}\right] }{\left[\sum_{i=1}^n  e^T(\bx_i)\right]}
\leq \frac{1}{4na^2}.
\end{align}
Hence,
$s(\hat{V}_{ATT}^T,|\cdot|)\leq |\hat{V}_{ATT}^T(\bD)|+|\hat{V}_{ATT}^T(\bD')|\leq\frac{1}{2na^2}$. Likewise, for $\hat{V}_{ATC}^T(\bD)$, we have 
\begin{align}
\hat{V}_{ATC}^T(\bD)&=\frac{\sum_{i=1}^n (1-e^T(\bx_i))^2 \left[\frac{v_1(\bx_i)}{e^T(\bx_i)} + \frac{v_0(\bx_i)}{1-e^T(\bx_i)}\right]}{\left[\sum_{i=1}^n  (1-e^T(\bx_i))\right]^2}
= \frac{\sum_{i=1}^n (1-e^T(\bx_i)) \left[\frac{v_1(\bx_i)(1-e^T(\bx_i))}{e^T(\bx_i)} + v_0(\bx_i)\right]}{\left[\sum_{i=1}^n  (1-e^T(\bx_i))\right]^2} \notag\\
&\leq \frac{\sum_{i=1}^n (1-e^T(\bx_i)) \left[\frac{(1-a)}{4a} + 1/4\right]}{\left[\sum_{i=1}^n  (1-e^T(\bx_i))\right]^2}
= \frac{\left[\frac{(1-a)}{4a} + 1/4\right] }{\left[\sum_{i=1}^n  (1-e^T(\bx_i))\right]}\leq \frac{1}{4na^2}.
\end{align}
Hence,
$s(\hat{V}_{ATC}^T,|\cdot|)\leq |\hat{V}_{ATC}^T(\bD)|+|\hat{V}_{ATC}^T(\bD')|\leq\frac{1}{2na^2}$. 
\end{proof}

To avoid introducing a substantial bias in $\hat{\tau}$, we should make $a$ small, e.g., $a \leq 0.10$.   However, with small $a$, the global sensitivities in Theorem \ref{th:variance} could be large enough that, with small $\epsilon$, the noise variance in the Laplace mechanism is large compared to $\hat{V}$ itself, which could lead to undesirably wide confidence intervals.

Therefore, rather than use Laplace mechanisms  for $\hat{\tau}$ and $\hat{V}$, we use the subsampling and aggregation technique reviewed in Section~\ref{sec: differential_privacy}. We split $\bD$ into $M$ disjoint subsets, $\{\bD_1, \dots, \bD_M\}$, of approximately equal size. In each $\bD_m$, we estimate propensity scores using only the data in that subset, and truncate them as in \eqref{eq:truncated_propen}.  Using the truncated propensity scores, in each $\bD_m$ we compute the treatment effect estimate $\hat{\tau}_{m}^T$ of interest and its approximated variance $\hat{V}_{m}^T$ using the expressions in Table \ref{Tab_first}, replacing each $e(\bx_i)$ with its truncated version $e^T(\bx_i)$. Finally, we average these estimates over the $M$ partitions to obtain, for the particular treatment effect estimate of interest,  
\begin{align}\label{eq:average_trt}
\bar{\tau}^T=\sum_{m=1}^M\hat{\tau}_{m}^T/M,\:\:\bar{V}^T=\sum_{m=1}^M\hat{V}_{m}^T/M. 
 \end{align}

The global sensitivities of 
$\bar{\tau}^T$ and $\bar{V}^T$ are $2/M$ and
$s(\hat{V}^T,|\cdot|)/M$, respectively. 

The estimators in \eqref{eq:average_trt} require at least two records with $z_i=1$ and $z_i=0$ in each partition.  When $\bD$ has large sample size and adequate numbers of treated and control records, this is practically a near certainty for reasonable values of $M$.  However, it is theoretically possible to sample a partition with too few treateds or controls.  We discuss this scenario in Section \ref{sec:conc}.

To complete the subsampling and aggregation algorithm, we add independent noise to each of the quantities in \eqref{eq:average_trt} using Laplace mechanisms. Suppose the total privacy budget is $\epsilon$. For $0<\pi<1$, we use $(1-\pi)\epsilon$ privacy budget for the treatment effect estimate and  $\pi\epsilon$ for the variance estimate. 
Specifically, we compute $\bar{\tau}^{T,\epsilon}=\bar{\tau}^T+\eta_1$,
where $\eta_1\sim Laplace(0,2/(M\epsilon(1-\pi)))$ and  
$\bar{V}^{T,\epsilon}=\bar{V}^T+\eta_2$,
where $\eta_2\sim Laplace(0,s(\bar{V}^T,|\cdot|)/(M\epsilon\pi))$. 

While $\bar{\tau}^{T,\epsilon}$ and $\bar{V}^{T,\epsilon}$ are differentially private, they may not be readily usable to make interpretable inferences about $\tau$.
In particular, $\bar{\tau}^{T,\epsilon}$ is not guaranteed to lie in $(-1,1)$, and $\bar{V}^{T,\epsilon}$ could be negative. Therefore, we use a Bayesian post-processing algorithm to turn the differentially private point and variance estimates into interpretable inferences about $\tau$. The basic idea is as follows.  Since the data analyst only has  $(\bar{\tau}^{T,\epsilon}, \bar{V}^{T,\epsilon})$, the analyst treats $(\bar{\tau}^{T}, \bar{V}^{T})$ as unknown quantities. The analyst draws many, say $L$, plausible values of the unobserved $(\bar{\tau}^T, \bar{V}^T)$ from their posterior distribution, and from each plausible value samples a value of $\tau$.  The $L$ draws of $\tau$  can be summarized for inferences.

We now offer the details of this post-processing step.  For clarity we introduce new notation $(\bar{\tau}, \bar{V})$ to represent the analyst's random variables for the unknown values of $(\bar{\tau}^T, \bar{V}^T)$. We specify two models $\mathcal{M}_1$ and $\mathcal{M}_2$ independently, given by 
\begin{align}\label{model}
&(\mbox{Model}\:\mathcal{M}_1):\:\:\:\:\bar{\tau}^{T,\epsilon}=\bar{\tau}+\zeta_2,\:\:
\zeta_2\sim \mbox{Laplace}(0,2/(M\epsilon(1-\pi))),\nonumber\\
&(\mbox{Model}\:\mathcal{M}_2):\:\:\:\:\bar{V}^{T,\epsilon}=\bar{V}+\zeta_1,\:\:
\zeta_1\sim \mbox{Laplace}(0,s(\bar{V}^T,|\cdot|)/(M\epsilon\pi)).
\end{align}
 
We assign  $\bar{\tau}\sim U(-1,1)$ and $\bar{V}\sim U(0,s(\bar{V}^T,|\cdot|)/2)$ prior distributions, where $s(\bar{V}^T,|\cdot|)/2$ is the upper bound from Theorem \ref{th:variance} for the treatment effect of interest. 
We estimate the  posterior distributions of  $\bar{\tau}$ and $\bar{V}$
using elliptical slice sampling \citep{nishihara2014parallel}.  Importantly, we do not use the confidential values of $\bar{\tau}^{T}$ and $\bar{V}^{T}$ in the sampling algorithms; we only use the differentially private statistics. 

We obtain $L$ post burn-in samples of $\bar{\tau}$, denoted as $\bar{\tau}^{*(1)}, \dots, \bar{\tau}^{*(L)}$, and of $\bar{V}$, denoted as $\bar{V}^{*(1)}, \dots, \bar{V}^{*(L)}$. For $l=1, \dots, L$, we draw a sample $\tilde{\tau}^{(l)}\sim N(\bar{\tau}^{*(l)}, \bar{V}^{*(l)})$.  
   
These $L$ draws represent an approximate posterior distribution for $\tau$.  We use $\tilde{\tau}^{\epsilon}=\sum_{l=1}^L\tilde{\tau}^{(l)}/L$ as the privacy-protected point estimator for $\tau$. We construct the 2.5\% and 97.5\% empirical quantiles 
$\tilde{\tau}_{lower}^{\epsilon}$ and $\tilde{\tau}_{upper}^{\epsilon}$ from $\tilde{\tau}^{(1)}, \dots, \tilde{\tau}^{(L)}$, and use  $(\tilde{\tau}_{lower}^{\epsilon}, \tilde{\tau}_{upper}^{\epsilon})$ as the privacy-protected 95\% interval estimate for $\tau$.  

In drawing values of $\tau$, we rely on large-sample normality  for  
the sampling distribution of $\bar{\tau}^T$ as defined in  \eqref{eq:average_trt}, namely  $\bar{\tau}^T \sim N(\tau, \bar{V}^T)$. With a diffuse prior on $\tau$, we have $\tau \sim N(\bar{\tau}^T, \bar{V}^T)$, which we can sample from to summarize inferences about $\tau$.  This presumes the sampling variability in $\bar{V}^T$ is negligible compared to $\bar{V}^T$ itself, which is generally reasonable and typically assumed in large-sample inference  \citep{rubin1987multiple}. In our setting, the analyst does not know $\bar{\tau}^T$ and $\bar{V}^T$; rather, the analyst has plausible draws of each.
Thus, we follow the strategy described in \citet{zhou:reiter} for Bayesian inference with plausible draws of unknown values: for each plausible draw $(\tilde{\tau}^{*(l)}, \bar{V}^{*(l)})$ of $(\bar{\tau}^T, \bar{V}^T)$, we sample a value of $\tau$ from $N(\bar{\tau}^{*(l)}, \bar{V}^{*(l)})$, and concatenate the draws for inferences about $\tau$.

The entire process for estimating $\tau$ is summarized in Algorithm \ref{alg1}. Theorem~\ref{thm_dp} formally states and proves that Algorithm \ref{alg1} is differentially private.

\begin{theorem}\label{thm_dp}
Algorithm \ref{alg1} satisfies $\epsilon-$differential privacy.
\end{theorem}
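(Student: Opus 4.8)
The plan is to decompose Algorithm~\ref{alg1} into three stages --- the subsample-and-aggregate computation of $(\bar{\tau}^T,\bar{V}^T)$, the two Laplace perturbations that produce $(\bar{\tau}^{T,\epsilon},\bar{V}^{T,\epsilon})$, and the Bayesian post-processing that yields $(\tilde{\tau}^{\epsilon},\tilde{\tau}_{lower}^{\epsilon},\tilde{\tau}_{upper}^{\epsilon})$ --- and to track the privacy budget through each stage using the Laplace mechanism together with the sequential-composition and post-processing properties reviewed in Section~\ref{sec: differential_privacy}.

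First I would pin down the global sensitivities of the aggregated statistics. Because the partition $\{\bD_1,\dots,\bD_M\}$ is formed by a data-independent rule and the truncation level $a$ is fixed in advance, any two neighboring databases $\bD,\bD'$ differ in a single record that lies in exactly one subset, say $\bD_{m_0}$; all other subsets, and hence all $\hat{\tau}_m^T$ and $\hat{V}_m^T$ with $m\neq m_0$, are unchanged. Thus $\bar{\tau}^T$ changes only through $\hat{\tau}_{m_0}^T$, whose change is bounded by the per-partition version of Lemma~\ref{lem1}, and $\bar{V}^T$ changes only through $\hat{V}_{m_0}^T$, whose change is bounded by the per-partition version of Theorem~\ref{th:variance}. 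Dividing by $M$ from the averaging gives $s(\bar{\tau}^T,|\cdot|)=2/M$ and $s(\bar{V}^T,|\cdot|)=s(\hat{V}^T,|\cdot|)/M$, matching the scales used for $\eta_1$ and $\eta_2$.

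With the sensitivities in hand, the two perturbations are standard Laplace mechanisms. Since $\eta_1$ has scale $2/(M\epsilon(1-\pi))$ equal to $s(\bar{\tau}^T,|\cdot|)/((1-\pi)\epsilon)$, releasing $\bar{\tau}^{T,\epsilon}$ alone is $(1-\pi)\epsilon$-DP; since $\eta_2$ has scale equal to $s(\bar{V}^T,|\cdot|)/(\pi\epsilon)$, releasing $\bar{V}^{T,\epsilon}$ alone is $\pi\epsilon$-DP. By sequential composition, jointly releasing $(\bar{\tau}^{T,\epsilon},\bar{V}^{T,\epsilon})$ is $((1-\pi)\epsilon+\pi\epsilon)$-DP, i.e., $\epsilon$-DP. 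Finally, the entire Bayesian step --- the elliptical slice sampler for $(\bar{\tau},\bar{V})$, the draws $\tilde{\tau}^{(l)}\sim N(\bar{\tau}^{*(l)},\bar{V}^{*(l)})$, and the resulting point estimate and empirical quantiles --- is a (randomized) map applied only to $(\bar{\tau}^{T,\epsilon},\bar{V}^{T,\epsilon})$ with no further access to $\bD$, so by the post-processing property the released $(\tilde{\tau}^{\epsilon},\tilde{\tau}_{lower}^{\epsilon},\tilde{\tau}_{upper}^{\epsilon})$ inherit $\epsilon$-DP.

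The step I expect to require the most care is the sensitivity bookkeeping for $\bar{V}^T$. Unlike the bound in Lemma~\ref{lem1}, the bounds in Theorem~\ref{th:variance} scale with the sample size, so one must apply them at the partition level --- with the per-subset sample size $n_m$ rather than $n$ --- before averaging, and verify that this is exactly the sensitivity to which $\eta_2$ is calibrated. I would also confirm explicitly that the priors $\bar{\tau}\sim U(-1,1)$ and $\bar{V}\sim U(0,s(\bar{V}^T,|\cdot|)/2)$, as well as all sampler inputs, depend on $\bD$ only through the already-privatized pair, so that nothing in the post-processing reopens a channel to the confidential data.
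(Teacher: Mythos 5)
Your proof is correct and follows essentially the same route as the paper's: two Laplace mechanisms calibrated to the subsample-and-aggregate sensitivities of $\bar{\tau}^T$ and $\bar{V}^T$, combined by sequential composition into a total budget of $(1-\pi)\epsilon+\pi\epsilon=\epsilon$, with the Bayesian inference step covered by the post-processing property. You are in fact more careful than the paper on one point---you explicitly derive the aggregated sensitivities from the observation that a neighboring record falls in exactly one partition, and you flag that the variance bound of Theorem~\ref{th:variance} must be applied at the per-partition sample size $n_m$ rather than $n$ before averaging---whereas the paper simply asserts these sensitivities in the text preceding the theorem and invokes them in the proof.
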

\begin{proof}
Since $\bar{\tau}^T$ has a global sensitivity of $2/M$, defining $\bar{\tau}^{T,\epsilon}=\bar{\tau}^T+ Laplace(0,2/(M\epsilon(1-\pi)))$ 
is a $(1-\pi)\epsilon$-DP algorithm. Using a similar argument, $\bar{V}^{T,\epsilon}=\bar{V}^T+ Laplace(0,s(\bar{V}^T,|\cdot|)/(M\epsilon\pi))$ is a $\pi\epsilon$-DP algorithm. The Bayesian inference steps rely entirely on $(\bar{\tau}^{T,\epsilon}, \bar{V}^{T,\epsilon})$. By the post-processing property of differential privacy, they do not affect the privacy guarantee. Hence, releasing $\tilde{\tau}^{\epsilon}$ and  $(\tilde{\tau}_{lower}^{\epsilon}, \tilde{\tau}_{upper}^{\epsilon})$ from Algorithm \ref{alg1} is $(1-\pi)\epsilon+\pi\epsilon=\epsilon$-DP.
\end{proof}

\begin{algorithm}\label{alg1}
	\caption{Differentially Private WATE and its 95\% Interval Estimate}
	\label{alg1}
	\SetAlgoLined
	\KwIn{(1) $\bD:$ Dataset $\{y_i,\bx_i,z_i: i=1, \dots, n\}$ ; (2) $M:$ Number of partitions; \\ (3) $a:$ Truncation level; (4) $\epsilon:$ Privacy loss budget; (5) $\pi:$ fraction of privacy loss budget allocated to variance estimation.}
	\KwOut{(1) DP WATE estimate  $\tilde{\tau}^{\epsilon}$; (2) DP 95\% interval $(\tilde{\tau}_{lower}^{\epsilon}, \tilde{\tau}_{upper}^{\epsilon})$ for WATE}
 
 \Begin
{
	\tcc{\small{Step 1: Partition the data as a part of subsample and aggregation step. }}
	Choose a random partition $\{\bD_1, \dots, \bD_M\}$ of  $\bD$\\

	\tcc{\small{Step 2: Compute WATE estimate and its estimated variance based on truncated propensity scores in each subset.}}
	\For{$ m \in 1:M $}
	 {

	 	Compute WATE estimate $\hat{\tau}_{m}^T$ and its approximated variance $\hat{V}_{m}^T$ using the truncated propensity score defined in (\ref{eq:truncated_propen})
	 	 from $\bD_m$.
	 }
	 \tcc{\small{Step 3: Add noise following Laplace mechanism.}}
    Compute the average of treatment effects $\bar{\tau}^T$ and its estimated average variance $\bar{V}^T$ following (\ref{eq:average_trt}).\\
    Generate $\eta_1\sim $ Laplace($0,2/(M\epsilon(1-\pi))$)
     and $\eta_2\sim$ Laplace($0,s(\bar{V}^T,|\cdot|)/(\epsilon\pi)$).\\ 
     Compute noisy versions $\bar{\tau}^{T,\epsilon}=\bar{\tau}^T+\eta_1$ and $\bar{V}^{T,\epsilon}=\bar{V}^{T}+\eta_2.$\\
     \tcc{\small{Step 4: Apply Bayesian post-processing steps.}}
     Fit models $\mathcal{M}_1$ and $\mathcal{M}_2$ independently, which are given by
\begin{align}
&(\mbox{Model}\:\mathcal{M}_1):\:\:\:\:\bar{\tau}^{T,\epsilon}=\bar{\tau}+\zeta_2,\:\:
\zeta_2\sim \mbox{Laplace}(0,2/(M\epsilon(1-\pi))),\nonumber\\
&(\mbox{Model}\:\mathcal{M}_2):\:\:\:\:\bar{V}^{T,\epsilon}=\bar{V}+\zeta_1,\:\:
\zeta_1\sim \mbox{Laplace}(0,s(\bar{V}^T,|\cdot|)/(M\epsilon\pi)),
\end{align}
	\For{$l \in 1:L$}
	 {
		Draw elliptical slice samples \citep{nishihara2014parallel} for $\bar{\tau}$	and $\bar{V}$, denoted by $\bar{\tau}^{(l)}$ and $\bar{V}^{(l)}$, respectively.\\
		Draw $\tilde{\tau}^{(l)}\sim N(\bar{\tau}^{(l)}, \bar{V}^{(l)})$.
	 }
	 Compute $\tilde{\tau}^{\epsilon}=\sum_{l=1}^L\tilde{\tau}^{(l)}/L$ and $\tilde{\tau}_{lower}^{\epsilon}=2.5\%$ empirical quantile, $\tilde{\tau}_{upper}^{\epsilon}=97.5\%$ empirical quantile.\\
	 \Return{ $\tilde{\tau}^{\epsilon}$, $(\tilde{\tau}_{lower}^{\epsilon}, \tilde{\tau}_{upper}^{\epsilon})$.}
}
\end{algorithm}

\subsection{Theoretical Study of the DP WATE}\label{theory}
In this section, we discuss some asymptotic properties of 
$\tilde{\tau}^{\epsilon}$.  Here, we presume any bias in the treatment effect estimators  introduced by truncating propensity scores is negligible. This is reasonable when $a$ is small and all $e(\bx_i)$ are bounded away from $a$. If the truncation does produce a non-negligible bias,
the results still support the statement that the differentially private WATE has a similar distribution as the WATE based on the trimmed propensity scores.

As discussed in Section~\ref{sec:DPwate}, when the sample size $n$ is large, 
the analyst's draws from $N(\bar{\tau},\bar{V})$ are samples from the posterior distribution of $\tau$, from which the analyst obtains $\tilde{\tau}^{\epsilon}$. 
Let $\tilde{g}^{\epsilon}$ denote this distribution, i.e.,  $\tilde{g}^{\epsilon}=N(\bar{\tau},\bar{V})$. When $n$ is large, the distribution of the non-private estimator $\hat{\tau}$ is approximately $g=$ N($\tau,V$). To assess the discrepancy between $\tilde{\tau}^{\epsilon}$ and $\hat{\tau}$, we develop an upper bound for the discrepancy between $g$ and $\tilde{g}^{\epsilon}$, given by $P(KL(\tilde{g}^{\epsilon},g)>c)$, for any $c>0,$ as a function of $M$, $\epsilon$ and $V$, as $n\rightarrow\infty$.  Here, $KL$ stands for Kullback-Leibler divergence.

For $m=1, \dots, M$, let $n_m$ be the number of observations in $\bD_m$; let $\bD_{T,m}$, $\bD_{u,m}$ and $\bD_{l,m}$ denote the sets of observations with propensity scores between $a$ and $1-a$, greater than $1-a$, and less than $a$, respectively; and, let $n_{T,m}$, $n_{u,m}$ and $n_{l,m}$ be the number of samples in each of $\bD_{T,m}$, $\bD_{u,m}$ and $\bD_{l,m}$, respectively.  We make the following assumptions about the sample sizes. 

\textbf{(A1)} As $n\rightarrow\infty$, $n_{m}\rightarrow\infty$ for all $m=1, \dots, M$.

\textbf{(A2)} As $n_m\rightarrow\infty$, $n_{u,m}/n_m\rightarrow 0$ and $n_{l,m}/n_m\rightarrow 0$.

\noindent Since $n_m=n_{T,m}+n_{l,m}+n_{u,m}$, assumptions (A1) and (A2) imply that $n_{T,m}/n_m\rightarrow 1$ as $n\rightarrow\infty$. 

\begin{lemma}\label{lemma_DP}
Under (A1) and (A2), the following results hold. 
\begin{enumerate}[(i)]
    \item $P(|\bar{\tau}-\tau|>c)\leq 2\exp(-M\epsilon(1-\pi)c/6)$, as $n\rightarrow\infty$, for any $c>0$.
\item $P(|\bar{V}-V|>c)\leq 2\exp(-M\epsilon\pi c/6)$, as $n\rightarrow\infty$, for any $c>0$.
\end{enumerate}
\end{lemma}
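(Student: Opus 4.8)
The plan is to write each posterior draw's deviation from its target as a sum of three pieces --- the differential privacy noise injected by the Laplace mechanism, the noise from drawing the posterior in the Bayesian post-processing model, and the ordinary (non-private) estimation error --- and then to control the two noise pieces by Laplace tail bounds while showing the estimation error vanishes as $n\rightarrow\infty$. The structural fact that makes this work is that the scale of the mechanism noise and the scale of the matching post-processing model are deliberately equal, so both noise pieces are $\mathrm{Laplace}$ variates of the same scale. I would split $c$ into three equal parts, one for each piece, and apply a union bound; this needs no independence among the pieces, only the triangle inequality, and the three-way split is exactly what converts the scale $2/(M\epsilon(1-\pi))$ into the constant $1/6$ in the exponent.

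For (i), recall that $\bar{\tau}$ is drawn under model $\mathcal{M}_1$, whose posterior with the $U(-1,1)$ prior is a $\mathrm{Laplace}$ density of scale $b=2/(M\epsilon(1-\pi))$ centered at $\bar{\tau}^{T,\epsilon}$, truncated to $(-1,1)$. Since $\bar{\tau}^{T,\epsilon}=\bar{\tau}^T+\eta_1$ with $\eta_1\sim\mathrm{Laplace}(0,b)$, I would decompose $\bar{\tau}-\tau=(\bar{\tau}-\bar{\tau}^{T,\epsilon})+(\bar{\tau}^{T,\epsilon}-\bar{\tau}^T)+(\bar{\tau}^T-\tau)=\zeta_2+\eta_1+(\bar{\tau}^T-\tau)$ and bound $P(|\bar{\tau}-\tau|>c)\le P(|\zeta_2|>c/3)+P(|\eta_1|>c/3)+P(|\bar{\tau}^T-\tau|>c/3)$. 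Treating the post-processing draw as the $\mathrm{Laplace}(0,b)$ variate prescribed by $\mathcal{M}_1$ gives $P(|\eta_1|>c/3)=P(|\zeta_2|>c/3)=\exp(-c/(3b))=\exp(-M\epsilon(1-\pi)c/6)$, since $1/(3b)=M\epsilon(1-\pi)/6$. Finally, under (A1) and (A2) each $\hat{\tau}_m^T$ is consistent for $\tau$ (the propensity truncation bias is negligible because $n_{T,m}/n_m\rightarrow1$, and $n_m\rightarrow\infty$ supplies within-partition consistency), so $\bar{\tau}^T\rightarrow\tau$ in probability and $P(|\bar{\tau}^T-\tau|>c/3)\rightarrow0$. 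Letting $n\rightarrow\infty$ leaves the claimed $2\exp(-M\epsilon(1-\pi)c/6)$.

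Part (ii) follows the same template for the variance. Here $\bar{V}$ is drawn under $\mathcal{M}_2$, a $\mathrm{Laplace}$ of scale $b_V=s(\bar{V}^T,|\cdot|)/(M\epsilon\pi)$ centered at $\bar{V}^{T,\epsilon}=\bar{V}^T+\eta_2$ and truncated by the $U(0,s(\bar{V}^T,|\cdot|)/2)$ prior, so $\bar{V}-V=\zeta_1+\eta_2+(\bar{V}^T-V)$. The one extra constant to verify is that $1/(3b_V)\ge M\epsilon\pi/6$, i.e.\ $s(\bar{V}^T,|\cdot|)\le2$: by Theorem~\ref{th:variance} the aggregated sensitivity is $O(1/n)$ and in fact $s(\bar{V}^T,|\cdot|)\rightarrow0$, so this holds for large $n$ and each noise piece satisfies $P(|\cdot|>c/3)\le\exp(-c/(3b_V))\le\exp(-M\epsilon\pi c/6)$. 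The estimation-error term $\bar{V}^T-V$ again vanishes in probability by consistency of the aggregated variance estimator under (A1)--(A2), and the bounded prior support only sharpens concentration, so the union bound yields $2\exp(-M\epsilon\pi c/6)$ as $n\rightarrow\infty$.

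I expect the main obstacle to lie in the two approximations hidden in the phrase ``posterior draw.'' First, $\zeta_2$ and $\zeta_1$ are \emph{truncated} $\mathrm{Laplace}$ variates, so the untruncated tail $\exp(-c/(3b))$ must be justified: truncation to $(-1,1)$ rescales the density by $1/Z$ with $Z=P_{\mathrm{Laplace}}(X\in(-1,1))<1$, and I would have to argue that, because $\bar{\tau}^{T,\epsilon}$ concentrates near the interior point $\tau$ (so $Z$ is bounded away from $0$), this normalization is asymptotically harmless and does not spoil the stated constant $2$ --- the delicate case being the low-probability event that $\bar{\tau}^{T,\epsilon}$ or $\bar{V}^{T,\epsilon}$ lands outside the prior support. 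Second, the clause ``as $n\rightarrow\infty$'' is carrying real weight: it is what lets the $c/3$ allocated to the estimation-and-truncation-bias term be discarded, and making that convergence precise under (A1)--(A2) --- rather than the routine Laplace tail computations --- is where the genuine work is.
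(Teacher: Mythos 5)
Your proposal takes essentially the same route as the paper's proof: the identical three-term triangle-inequality/union-bound decomposition into post-processing noise, Laplace mechanism noise, and estimation error, with the two Laplace tails each contributing $\exp(-M\epsilon(1-\pi)c/6)$ (resp.\ $\exp(-M\epsilon\pi c/6)$) and the third term vanishing as $n\rightarrow\infty$. The only real difference is one of emphasis: the consistency of $\bar{\tau}^T$ and $\bar{V}^T$ under (A1)--(A2), which you assert and correctly flag as ``where the genuine work is,'' is exactly what occupies the bulk of the paper's proof (splitting each within-partition sum over $\bD_{T,m}$, $\bD_{l,m}$, $\bD_{u,m}$, almost-sure convergence plus dominated convergence for the bias, and a variance calculation following \cite{li2018balancing} to show $Var(\hat{\tau}_m^T)\rightarrow 0$), while your added care about the truncated-Laplace normalization and about needing $s(\bar{V}^T,|\cdot|)\leq 2$ for the variance-tail constant addresses two points the paper itself glosses over.
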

A proof of Lemma \ref{lemma_DP} is in the appendix.  
We use Lemma \ref{lemma_DP} to derive a bound on 
$P(KL(\tilde{g}^{\epsilon},g)>c)$. 
\begin{theorem}\label{theorem_DP}
Under (A1) and (A2), we have
\begin{align*}
P(KL(\tilde{g}^{\epsilon},g)>c)
\leq  2\exp\left(-\frac{M\epsilon(1-\pi)\sqrt{2Vc}}{6\sqrt{3}}\right)+4\exp(-M\epsilon\pi V c/9),   
\end{align*}
as $n\rightarrow\infty$.
\end{theorem}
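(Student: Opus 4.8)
The plan is to reduce the whole statement to the two tail bounds already supplied by Lemma~\ref{lemma_DP}, treating $\bar{\tau}$ and $\bar{V}$ as the random quantities (with $\tau$, $V$ fixed) to which those bounds apply. First I would write the Kullback--Leibler divergence between the two univariate Gaussians $\tilde{g}^{\epsilon}=N(\bar{\tau},\bar{V})$ and $g=N(\tau,V)$ in closed form,
\[
KL(\tilde{g}^{\epsilon},g)=\frac{(\bar{\tau}-\tau)^2}{2V}+\frac{1}{2}\left(\frac{\bar{V}}{V}-1-\log\frac{\bar{V}}{V}\right),
\]
and isolate its two contributions: a \emph{mean term} $(\bar{\tau}-\tau)^2/(2V)$, to be controlled by Lemma~\ref{lemma_DP}(i), and a \emph{variance term} $\tfrac{1}{2}h(\bar{V}/V)$ with $h(u)=u-1-\log u$, to be controlled by Lemma~\ref{lemma_DP}(ii). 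The two structural facts I would record are that $h(u)\ge 0$ for all $u>0$ (so both terms are nonnegative) and that $h'(u)=1-1/u$ is negative on $(0,1)$ and positive on $(1,\infty)$, i.e.\ $h$ is strictly decreasing then strictly increasing with minimum $h(1)=0$.

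Next I would split the budget $c$ unevenly, allocating $c/3$ to the mean term and $2c/3$ to the variance term. Since both terms are nonnegative, $\{KL(\tilde{g}^{\epsilon},g)>c\}\subseteq\{(\bar{\tau}-\tau)^2/(2V)>c/3\}\cup\{h(\bar{V}/V)>4c/3\}$, so a union bound reduces the problem to the two pieces. For the mean term the event is exactly $\{|\bar{\tau}-\tau|>\sqrt{2Vc/3}\}$, and applying Lemma~\ref{lemma_DP}(i) with threshold $\sqrt{2Vc/3}$ gives $2\exp(-M\epsilon(1-\pi)\sqrt{2Vc/3}/6)$, which simplifies to the first summand $2\exp\!\big(-M\epsilon(1-\pi)\sqrt{2Vc}/(6\sqrt{3})\big)$.

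For the variance term I would use the monotonicity of $h$ together with the elementary inequalities $h(1+\delta)\le 2\delta$ and $h(1-\delta)\le 2\delta$, taken with $\delta=2c/3$, to convert the bound on $h(\bar{V}/V)$ into one on $\bar{V}/V-1$: because $h(1\pm\delta)\le 4c/3$ and $h$ is increasing (resp.\ decreasing) past (resp.\ below) $1$, the event $\{h(\bar{V}/V)>4c/3\}$ is contained in $\{\bar{V}/V>1+2c/3\}\cup\{\bar{V}/V<1-2c/3\}$. Rewriting each one-sided event as a subset of $\{|\bar{V}-V|>2Vc/3\}$ and bounding each by Lemma~\ref{lemma_DP}(ii) with threshold $2Vc/3$ yields $2\exp(-M\epsilon\pi(2Vc/3)/6)=2\exp(-M\epsilon\pi Vc/9)$ per side, hence $4\exp(-M\epsilon\pi Vc/9)$ after summing the two sides. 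Adding the mean and variance contributions produces exactly the claimed bound.

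I expect the only genuine obstacle to be the analytic inequality on the contracting side, $h(1-\delta)\le 2\delta$, which is equivalent to $1-\delta\ge e^{-3\delta}$ and therefore holds only for $\delta$ up to roughly $0.94$; the expanding-side inequality $h(1+\delta)\le 2\delta$ reduces to $\log u\ge 1-u$ and holds for every $u\ge1$. This gap is harmless in the regime of interest: for $\delta=2c/3\ge 1$ the lower event $\{\bar{V}/V<1-2c/3\}$ is empty since $\bar{V}>0$ under the prior $U(0,s(\bar{V}^T,|\cdot|)/2)$, and for the small values of $c$ for which the KL bound is informative the inequality holds outright. I would state these two inequalities and, if desired, verify them with a one-line Taylor/convexity argument rather than grinding through the calculation in the main text.
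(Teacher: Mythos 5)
Your proof follows the same overall route as the paper's: write the Gaussian KL divergence in closed form, split the deviation budget across its components, and convert each piece into a tail event for $\bar{\tau}$ or $\bar{V}$ that Lemma~\ref{lemma_DP} controls. Your treatment of the mean term (Lemma~\ref{lemma_DP}(i) at threshold $\sqrt{2Vc/3}$) is identical to the paper's. The one real difference is the variance contribution. The paper splits it into two signed terms, $U_2=\bar{V}/(2V)-1/2$ and $U_3=-\frac{1}{2}\log(\bar{V}/V)$, allots $c/3$ to each, and disposes of $U_3$ by asserting $P(|U_3|>c/3)\le P(|U_2|>c/3)$ from $\log h\le h-1$; you instead keep the nonnegative combination $\frac{1}{2}h(\bar{V}/V)$, with $h(u)=u-1-\log u$, intact with budget $2c/3$ and invert $h$ via $h(1\pm\delta)\le 2\delta$. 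Both bookkeepings land on exactly the claimed bound (your per-side doubling reproduces the factor $4$ that the paper gets from bounding $U_2$ and $U_3$ separately). Your grouping is in fact the more careful one: the paper's domination claim is false on the contracting side, since for $\bar{V}<V$ one has $|\log(\bar{V}/V)|>|\bar{V}/V-1|$, so $\{|U_3|>c/3\}\not\subseteq\{|U_2|>c/3\}$ (there are violating points for every $c>0$, because $e^{-x}>1-x$); keeping $h$ intact exploits the cancellation between the linear and log terms and makes your containment genuinely valid whenever $\delta=2c/3\lesssim 0.94$.

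That said, the caveat you flag at the end is a real gap, and your patch does not close it. For $\delta\ge 1$ the problem is not that the lower event $\{\bar{V}/V<1-\delta\}$ is empty---that observation makes the required containment harder to satisfy, not easier---but that $\{h(\bar{V}/V)>4c/3\}$ can occur with $\bar{V}/V$ near zero, where $h$ blows up; such points lie in neither one-sided event, and the posterior for $\bar{V}$ (uniform prior times Laplace likelihood on $(0,s(\bar{V}^T,|\cdot|)/2)$) does place mass arbitrarily close to zero. Nor can Lemma~\ref{lemma_DP}(ii) alone repair this: the event $\{\bar{V}<uV\}$ for small $u$ has Lemma-(ii) bound $2\exp(-M\epsilon\pi V(1-u)/6)$, which does not decay in $c$. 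So your argument establishes the theorem only for $c\lesssim 3/2$. Since the paper's own proof suffers from the same contracting-side defect (for every $c$, in fact), you have not introduced a new gap; you have surfaced, and for moderate $c$ actually mitigated, one that is already present in the paper.
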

Theorem \eqref{theorem_DP} shows that, as $M$ or $\epsilon$ get large, the distance between the two quantities goes to $0$.  

\section{Simulation Studies}\label{sec:sim_study}
In this section, we illustrate repeated sampling properties of the differentially private point, variance, and interval estimates of  
the ATE, ATT and ATC.  We begin with simulations that set $M=100$ and $a=0.05$ for data of size $n=10000$ and privacy loss budget $\epsilon=1$.  We then vary simulation design parameters one at a time to investigate the sensitivity of the findings.  Section~\ref{sensitivity_M} varies $M$;  Section \ref{sensitivity_a} varies $a$;  Section \ref{sensitivity_n} varies $n$; and, Section~\ref{sensitivity_ep} varies  $\epsilon$. 

\subsection{Baseline Studies with $M=100,  a=0.05, n=10000, \epsilon=1$}\label{sim:design}

To create $\bD$ in any simulation run, we generate $n= 10000$ observations each measured on $p=4$ covariates, $\bx_i = (x_{i1}, x_{i2}, x_{i3}, x_{i4})$. We simulate $\bx_i\sim N(\bzero,(1-\rho)\bI+\rho\bJ)$, $0\leq \rho\leq 1$, where $\bJ$ is a $p\times p$ matrix with each entry as $1$. The covariance implies correlation of $\rho$ between any two predictors, and we set $\rho=0.2$ in all simulations.
For $i=1, \dots, n$, we generate a treatment status $z_i$ from a Bernoulli draw with probability $P(z_i = 1|\bx_i)$, where   
\begin{align}\label{treatment_assignment}
\mbox{logit} [P(z_i = 1|\bx_i)] = 0.1 + 0.2 \eta x_{i1} + 0.5 \eta x_{i2} - 0.25 \eta x_{i3} - 0.45 \eta x_{i4}. 
\end{align}
We vary $\eta \in \{2,4\}$ to change the level of overlap in the treatment and control samples, as shown in Figure~\ref{Overlap-change}. As $\eta$ increases, we observe increasingly sparse overlap in the propensity score distribution.

\begin{figure}[t]
  \centering
  \subfloat[$\eta=2$]{%
    \includegraphics[width=0.45\textwidth]{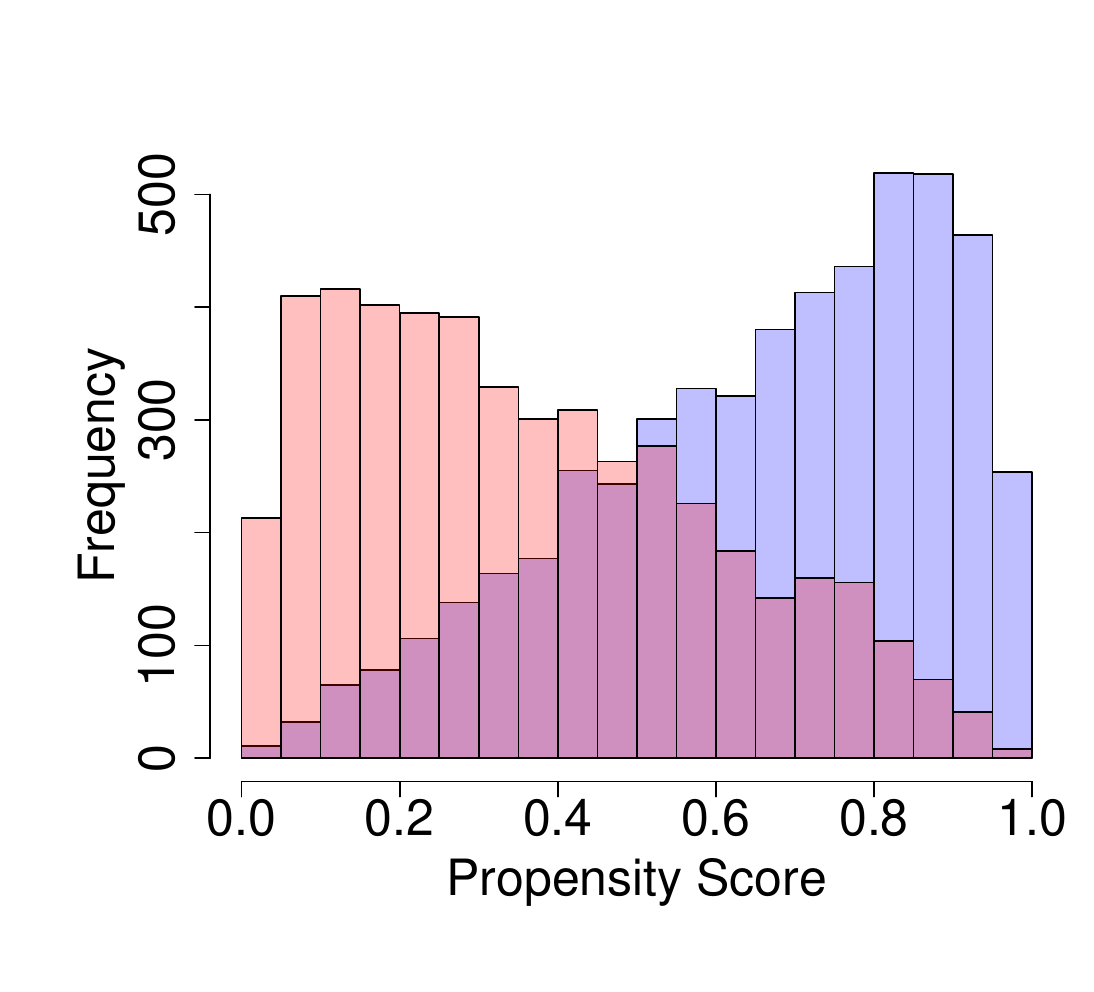}
    \label{overlap-nu2}
  }
  \hfill
  \subfloat[$\eta=4$]{%
    \includegraphics[width=0.45\textwidth]{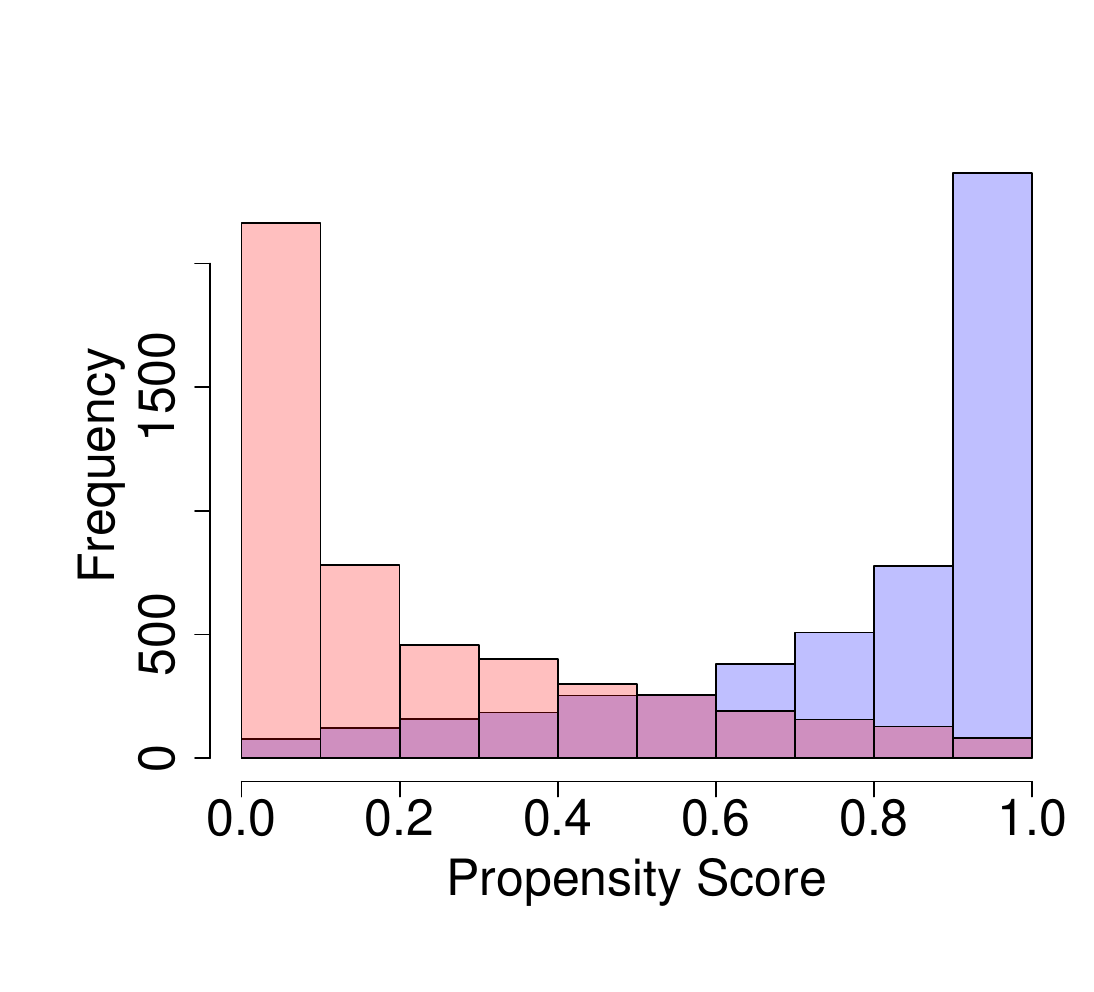}
    \label{overlap-nu4}
  }
  \caption{Simulated propensity score distributions in the simulation of Section \ref{sim:design} for treated (purple) and controls (pink). The propensity score distributions are shown for $\eta=2$ and $\eta=4$. The propensity score distributions show much less overlap when $\eta=4$.}
  \label{Overlap-change}
\end{figure}

For each $(\bx_i,z_i)$, we simulate the potential binary outcomes, $(y_i(0), y_i(1)),$ from Bernoulli distributions with probabilities governed by
\begin{align}\label{eq:outcome_gen}
\mbox{logit} [P(y(z) = 1)] = \beta_0 + \beta_1 x_1 + \beta_2 x_2 + \beta_3 x_3 + \beta_4 x_4 + \gamma z. 
\end{align}
We set $(\beta_0, \beta_1, \beta_2, \beta_3, \beta_4)=(0.15, -0.2, 0.3, -0.4, 0.6)$. 
We also vary $\gamma \in \{0,1,2\}$ to represent different strengths of treatment effects. Thus, we have six simulation scenarios corresponding to each combination of 
$(\eta,\gamma)\in\{2,4\}\times\{0,1,2\}$.

In each simulation, we compute the true treatment effects,

\begin{align}
&\tau_{ATE} = (1/n) \sum_{i=1}^n \left\{P(y_i(1)=1 \given\bx_i) - P(y_i(0)=1 \given \bx_i)\right\} \label{simATE}\\
&\tau_{ATT} = (1/n_T) \sum_{i:z_i=1}\left\{ P(y_i(1)=1 \given \bx_i) - P(y_i(0)=1 \given\bx_i)\right\} \label{simATT}\\
&\tau_{ATC} = (1/n_C) \sum_{i:z_i=0} \left\{P(y_i(1)=1 \given \bx_i) - P(y_i(0)=1 \given \bx_i)\right\}, \label{simATC}
\end{align}
where the expressions for $P(y_i(1)=1\given \bx_i)$ and $P(y_i(0)=1\given \bx_i)$ are obtained from (\ref{eq:outcome_gen}). The quantities $n_T$ and $n_C$ are the number of treated and control subjects, respectively.

For comparisons, in each simulation run, we compute estimated treatment effects $\hat{\tau}$ for each causal estimand without privacy concerns, i.e., without the partitions, truncation, or Laplace noise. 
We also compute the 95\% confidence interval (CI) using $(\hat{\tau} \pm 1.96 \sqrt{\hat{V}})$, where $\hat{V}$ is the estimated variance of the treatment effect calculated on the sample using (\ref{vareq}) without any privacy protections. 
We run $500$ independent replications of each scenario, sampling a new set of $\{(\bx_1,z_1,y_1), \dots, (\bx_n,z_n,y_n)\}$ each time.  We equally allocate privacy budget in the WATE point and variance estimation, and set $\pi=0.5$.

\begin{table}[h]
\begin{center}
\begin{tabular}{clcccccc} 
\hline
\hline
& $\eta$ & 2 & 2 & 2 & 4 & 4 & 4 \\ 
\cline{2-8} 
& $\gamma$ & 0 & 1 & 2 & 0 & 1 & 2 \\
\cline{2-8} 
\hline
\hline
& Avg. $\tau_{ATE}$  & 0 & .204 & .342 & 0 & .204 & .343\\
\multirow{6}{4em}{ATE} & $\hat{\tau}_{ATE}$ & & & & &\\
& $\,\,\,\,$ RMSE & .011 & .014 &  .012 & .019 & .019 & .022\\ 
& $\,\,\,\,$ 95\% CI coverage & 90.2 & 89.8 & 90.8 & 91.2 & 92.4 & 92.0 \\ 
& $\,\,\,\,$ 95\% CI length & .054 & .058 & .057 & .171 & .192 & .187 \\ 
& $\tilde{\tau}^{\epsilon}_{ATE}$ \\
& $\,\,\,\,$ RMSE & .016 & .016 & .015 & .023 & .021 & .024 \\ 
& $\,\,\,\,$ 95\% CI coverage & 96.8 & 97.4 & 97.2 & 98.0 & 98.0 & 98.2 \\ 
& $\,\,\,\,$ 95\% CI length & .113 & .134 & .148 & .281 & .295 & .316 \\
\hline
& Avg. $\tau_{ATT}$  & 0 & .205 & .345 & 0 & .206 & .348\\
\multirow{6}{4em}{ATT} &$\hat{\tau}_{ATT}$ \\
& $\,\,\,\,$ RMSE & .012 & .012 & .010 & .022 & .017 & .019\\ 
& $\,\,\,\,$ 95\% CI coverage & 89.8 & 90.8 & 91.6 & 91.0 & 91.8 & 92.0 \\ 
& $\,\,\,\,$ 95\% CI length & .059 & .062 & .062 & .201 & .229 & .213 \\ 
& $\tilde{\tau}^{\epsilon}_{ATT}$ \\
&  $\,\,\,\,$ RMSE & .016 & .014 & .013 & .026 & .023 & .021 \\ 
&  $\,\,\,\,$ 95\% CI coverage & 96.6 & 96.8 & 97.4 & 97.0 & 98.2 & 98.0 \\ 
&  $\,\,\,\,$ 95\% CI length & .136 & .144 & .169 & .303 & .324 & .332 \\ 
\hline
& Avg. $\tau_{ATC}$  & 0 & .202 & .338 &  0 & .202 & .337\\
\multirow{6}{4em}{ATC} & $\hat{\tau}_{ATC}$\\
& $\,\,\,\,$RMSE & .015 & .014 & .010 & .022 & .020 & .017 \\ 
&  $\,\,\,\,$ 95\% CI coverage & 90.2 & 90.4 & 91.0 & 91.6 & 92.2 & 92.4 \\ 
&  $\,\,\,\,$ 95\% CI length & .057 & .065 & .064 & .215 & .234 & .219 \\
& $\tilde{\tau}^{\epsilon}_{ATC}$ \\
&  $\,\,\,\,$ RMSE & .018 & .018 & .015 & .028 & .025 & .026 \\ 
&  $\,\,\,\,$ 95\% CI coverage & 97.2 & 97.4 & 97.4 & 98.0 & 98.2 & 98.4 \\ 
&  $\,\,\,\,$ 95\% CI length & .132 & .159 & .179 & .310 & .326 & .341 \\ 
 
\hline
\end{tabular}
\caption{Results from 500 simulations with $(M=100, \epsilon=1, a=.05)$ for the ATE, ATT and ATC. Results include the root of the average squared errors (RMSE) between the differentially private WATE estimate and the corresponding true value based on  \eqref{simATE} -- \eqref{simATC}; the percentage of the five hundred 95\% confidence intervals that cover the corresponding treatment effect; and, the average length of the estimated 95\% confidence intervals in parenthesis.  These quantities are shown for both the privacy protected and non-privacy protected estimation.}\label{Tab1}
\end{center}
\end{table}

As evident in Table~\ref{Tab1}, the differentially private point estimates have small average errors, indicating that they accurately estimate true treatment effects.  For context, the values of the various $\tau$ for scenarios with $\gamma=1,2$ tend to be around $.20$ to $.30$, so that RMSEs of $.01$ to $.02$ are modest fractions of the true treatment effects.  The average errors from the differentially private estimates tend to be larger than those from the non-privacy protected point estimates, reflecting the combined effects of the Laplace noise, truncation limits, and subsampling.  All methods are more accurate when there is greater overlap in the propensity scores, as one would expect. 

The  95\% CIs for the various $\tau$ without privacy protection cover less often than the nominal 95\% rate, whereas the 95\% intervals based on the  differentially private algorithms tend to cover more often than the nominal 95\% rate.   
The steps taken to protect privacy result in increased average interval lengths, which seemingly is the price to pay for the privacy protection.

\subsection{Sensitivity to the Choice of $M$}\label{sensitivity_M}

The choice of $M=100$ ensures that the standard deviation of the Laplace noise is much smaller than $\bar{V}^T$.  
In this section, we consider the effects on inference of changing $M$ by considering $M \in \{50, 100, 200\}$. In generating the data, we use the more challenging case of less overlap between treatment and control propensity score distributions, setting $\eta = 4$ and $\gamma = 1$. All other parameters are set at the  values described in Section \ref{sim:design}. 

Figure~\ref{M-change} displays the RMSEs of the point estimators, and the   
coverage rates and average lengths of the 95\% intervals. In these simulations, we see little practical impact of changing $M$  on the properties of the differentially private WATE estimates. 
For all $M$ considered, the differentially private WATE estimates generally are close to the corresponding non-private treatment effect WATE estimates computed with $\bD$, and  
the coverage rates of the differentially private intervals are all around 98\%. 
The interval lengths also are unremarkably different, with some suggestion that the interval lengths are smallest at $M=50$. As $M$ increases, the intervals are subject to  two countervailing effects. The variance of the Laplace noise decreases thereby encouraging shorter intervals, and the uncertainty in the propensity score estimates in each partition increases thereby encouraging longer intervals. We confirmed the latter fact by constructing 95\% intervals from samples of $\tau \sim N(\bar{\tau}^T,\bar{V}^T)$, that is, 
using partitioning without adding  Laplace noise. 
We suggest a rule-of-thumb for selecting $M$ in Section \ref{sec:conc}.

\begin{figure}[t]
  \centering
  \subfloat[]{\includegraphics[height=7cm,width=8cm]{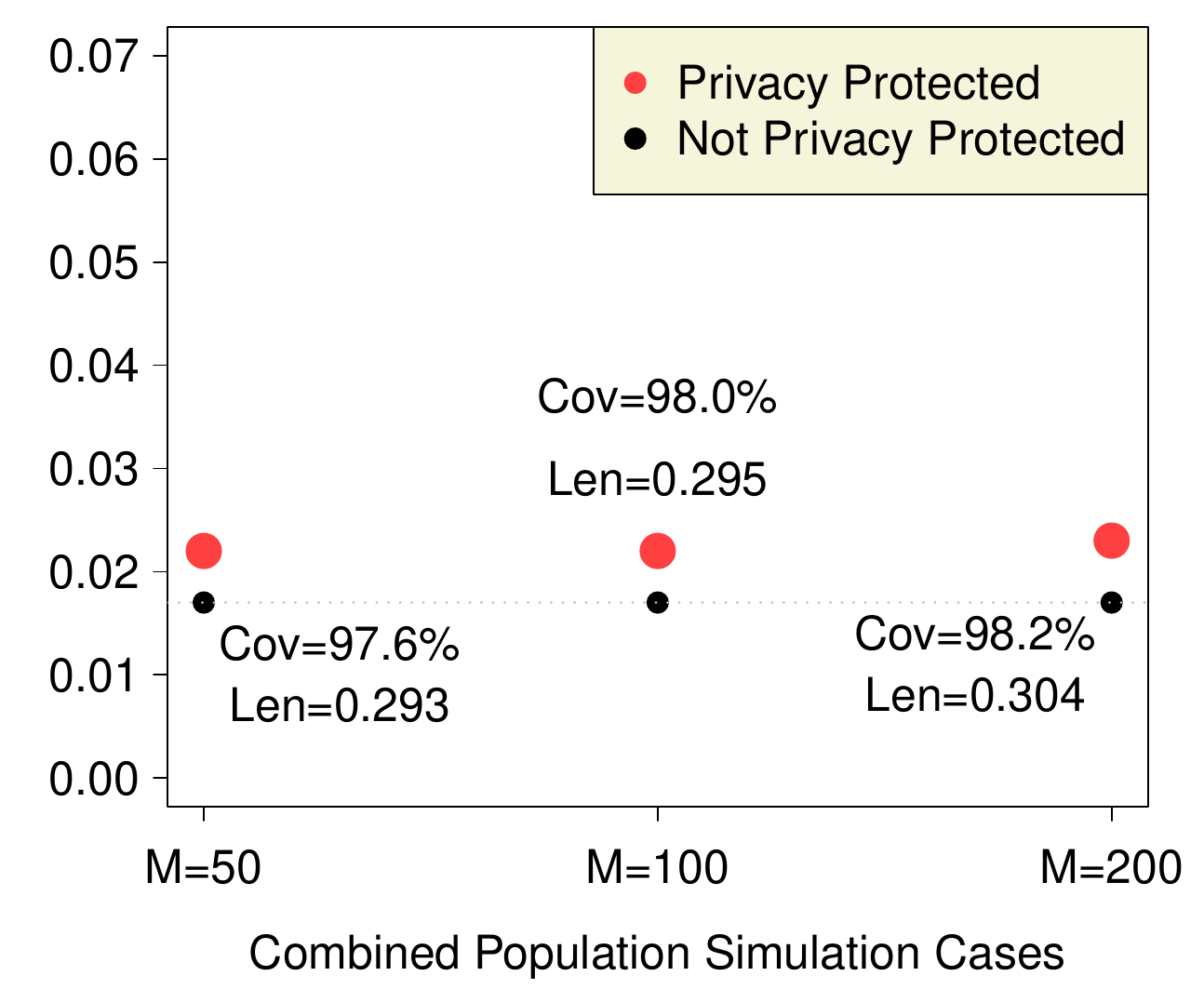}\label{sim-M1}}
  \hfill
  \subfloat[]{\includegraphics[height=7cm,width=8cm]{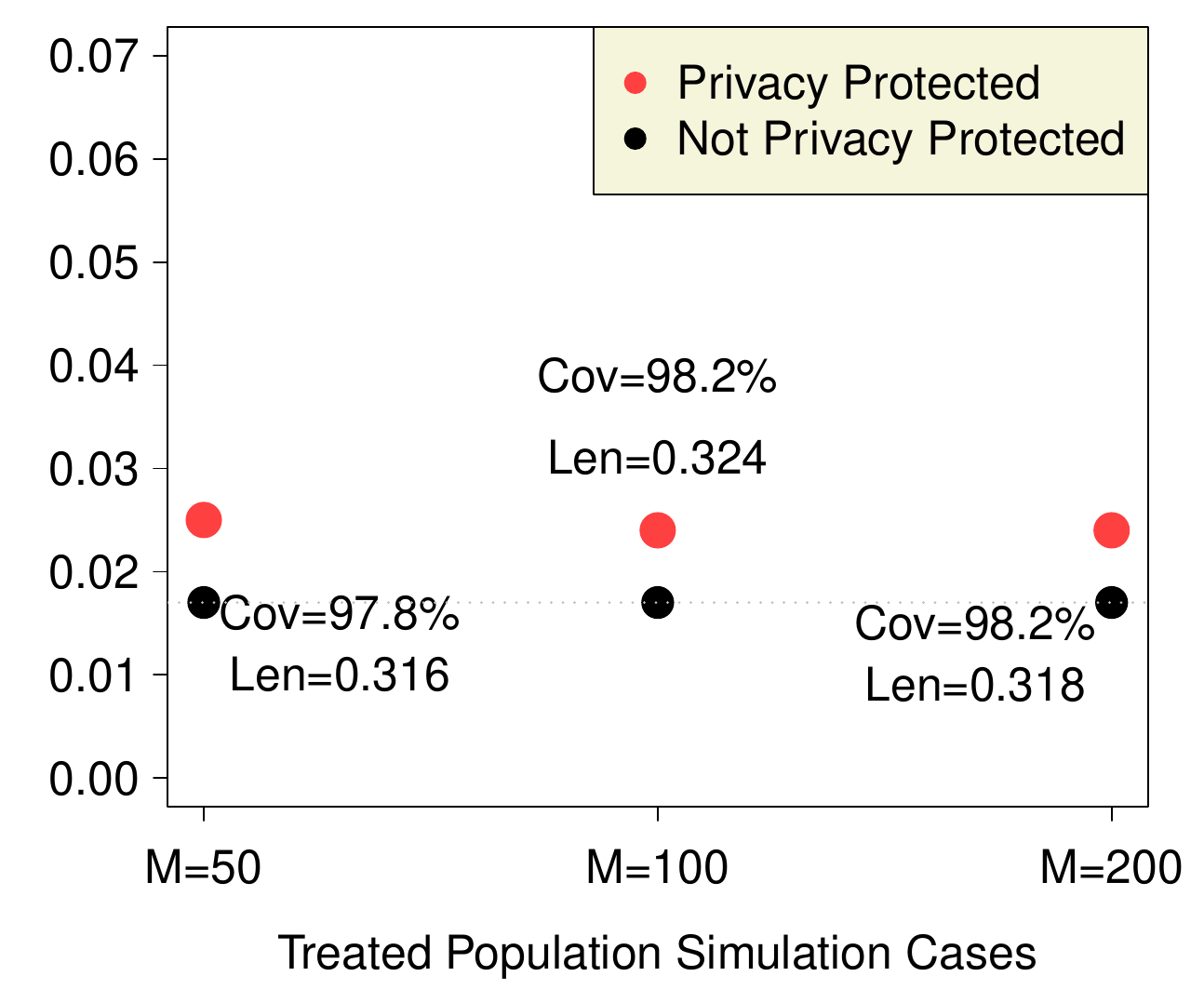}\label{sim-M2}}\\
  \subfloat[]{\includegraphics[height=7cm,width=8cm]{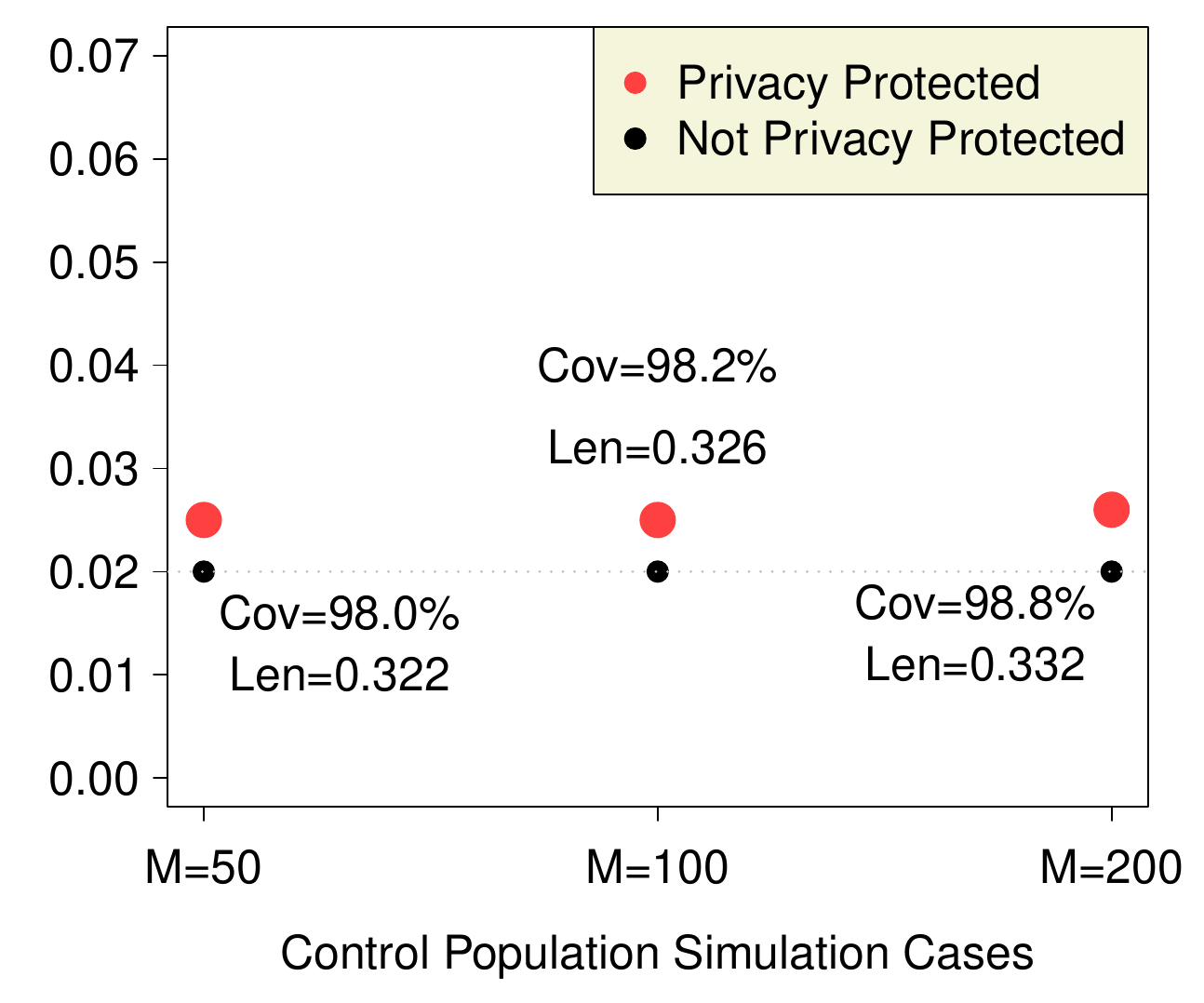}\label{sim-M3}}
  \caption{Root mean squared error (RMSE) of the differentially private WATEs for the ATE (Figure~\ref{sim-M1}), ATT (Figure~\ref{sim-M2}), and ATC  (Figure~\ref{sim-M3}), for $M \in \{50, 100, 200\}$. In all cases, $n=10000$, $\epsilon = 1$ and $a=.05$. RMSEs for WATE estimates with no privacy protection are denoted by black dots. RMSEs for the differentially private WATE estimates are denoted by red dots. ``Cov'' and ``Len'' stand for the coverage rate and average length of the 95\% intervals, both based on the differentially private algorithms.}
  \label{M-change}
\end{figure}

\subsection{Sensitivity to the Choice of Truncation Point}\label{sensitivity_a} 
While $a=.05$ may be a reasonable cut-off for propensity scores in many contexts, it is instructive to investigate the performance of the differentially private WATE inferences for other realistic values of $a$. To this end, we also examine the differentially private WATE inferences for $a\in\{0.03,0.07,0.1\}$.  Let $\tilde{\tau}^{\epsilon,a}$ correspond to the differentially private point estimate for the WATE for truncation limit $a$. We present the absolute distance between the non-private estimator on $\hat{\tau}$ computed using $\bD$, i.e., without truncation or privacy protections, and $\tilde{\tau}^{\epsilon,a}$. We write this difference as $\mbox{Dev}(\tilde{\tau}^{\epsilon,a})=|\tilde{\tau}^{\epsilon,a}-\hat{\tau}|$. We also present coverage rates and average lengths for the 95\% intervals 
using different choices of $a$. We set $\epsilon = 1$, $M = 100$, $\eta = 4$, and $\gamma = 1$. 

As evident in Table~\ref{Tab4}, in these simulations the properties of the point estimate and coverage rates for ATE, ATT and ATC are similar for all values of $a$ investigated.  This is because the Laplace noise variances are of comparable magnitude for the values of $a$  in this range. 
However, if we decrease $a$ to values near zero, say $a= .001$, the variance in the Laplace mechanism applied to $\bar{V}^T$ is about 20 times higher compared to using $a=0.05$. As a result, the 95\% intervals are much wider, and their coverage rate approaches $1$. When $a=0.001$, the RMSE values are nearly $1.5$ times higher than when $a=0.05$.

\begin{table}[!th]
\begin{center}
\begin{tabular}{ccccccc} 
\hline
\hline
& & & & & &  \\
& $\mbox{Dev}(\tilde{\tau}_{ATE}^{\epsilon,a})$ & $\mbox{Dev}(\tilde{\tau}_{ATT}^{\epsilon,a})$ & $\mbox{Dev}(\tilde{\tau}_{ATC}^{\epsilon,a})$ &  $\mbox{Cov}(\tilde{\tau}_{ATE}^{\epsilon,a})$ & $\mbox{Cov}(\tilde{\tau}_{ATT}^{\epsilon,a})$ & $\mbox{Cov}(\tilde{\tau}_{ATC}^{\epsilon,a})$ \\ 
& & & & & &  \\
\hline
\hline
& & & & & &  \\ 
 $a=0.03$ & 0.005 & 0.006 &  0.005 & 0.984 & 0.986 &  0.988 \\ 
  &  &  & & & & \\ 
$a=0.07$ & 0.005 & 0.006 &  0.006 & 0.978 & 0.978 &  0.982\\ 
 &  &  & & & & \\ 
$a=0.10$ & 0.007 & 0.008 & 0.008 & 0.974 & 0.976 &  0.978 \\ 
 &  &  & & & & \\ 
\hline
\end{tabular}
\caption{Results for simulations for $a \in \{0.03, 0.07, 0.10\}$.  Entries include the average absolute difference in the differentially private WATE and the non-private WATE without truncation, labeled $\mbox{Dev}(\tilde{\tau}^{\epsilon,a})$, and the coverage rate of the 95\% intervals, labeled $\mbox{Cov}(\tilde{\tau}^{\epsilon,a}).$ In all cases, $M = 100$, $n=10000$, and $\epsilon=1$. }\label{Tab4}
\end{center}
\end{table}

\subsection{Sensitivity to the Sample Size}\label{sensitivity_n}
We next illustrate the effect of the  sample size by repeating simulations with  $n=100000$ and $n=5000$. 
We set $\epsilon=1$, $M=100$, $a=0.05$, $\eta = 4$, and $\gamma = 1$ for this simulation. 
Table~\ref{Tab_big_data} displays the results. The point estimates from the differentially private WATE estimators remain  accurate, with RMSE values decreasing as $n$ increases. 
When $n = 5000$, the interval estimates widen substantially for both the non-private and differentially private estimators. At $n=5000$, the coverage rate for the differentially private interval is near 100\%, indicating that the inferential procedure at this sample size results in overly wide intervals.  

\begin{table}[!th]
\small
\begin{center}
\begin{tabular}{lcccccc} 
\hline
& \multicolumn{3}{c}{$n=100000$} & \multicolumn{3}{c}{$n=5000$}\\
\hline
& Combined & Treated & Control & Combined & Treated & Control \\
\hline
\hline
 $\hat{\tau}$ \\
$\,\,\,\,$ RMSE & 0.012 & 0.013 &  0.013 & 0.022 & 0.023 & 0.025\\ 
$\,\,\,\,$ 95\% CI coverage & 0.928 & 0.918 & 0.920 & 0.970 & 0.972 & 0.972\\ 
$\,\,\,\,$ 95\% CI length & 0.165 & 0.189 & 0.184 & 0.291  & 0.345
&  0.343\\ 
\hline
$\tilde{\tau}^{\epsilon}$ \\
$\,\,\,\,$ RMSE & 0.017 & 0.018 & 0.018 & 0.028  &  0.029 & 0.031
 \\ 
$\,\,\,\,$ 95\% CI coverage & 0.972 & 0.976 & 0.978 &  0.996  & 0.998
  &  0.998\\ 
$\,\,\,\,$ 95\% CI length & 0.254 & 0.271 & 0.276 &  0.648  & 0.729
 & 0.742 \\
\hline
\hline
\end{tabular}
\caption{Results with sample sizes $n=100000$ and $n=5000$, with $(M=100, \epsilon = 1, a=.05)$. Here,  $\tilde{\tau}^{\epsilon}$ is the privacy protected treatment effect estimate. Results include coverage rates and average lengths of the 95\% intervals.} \label{Tab_big_data}
\end{center}
\end{table}

\subsection{Sensitivity to the value of $\epsilon$}\label{sensitivity_ep}

Finally, we consider the effects of changing $\epsilon$.  In general, the choice of $\epsilon$ is driven by privacy desiderata, for example, as specified by the data holders.  Ideally, the value of $\epsilon$ also takes into account the likely usefulness of the outputs that could  result from applying the differentially private algorithm; that is, the process of setting $\epsilon$ considers a trade off between risk and usefulness. 
Here, we examine results for $\epsilon \in \{0.5, 1, 5\}$. As in Section~\ref{sensitivity_M}, to generate $\bD$ in each of the $500$ simulation runs, we set $\eta = 4$ and $\gamma = 1$. 

Figure \ref{ep-change} displays the results. 
Changing $\epsilon$ in these simulations influences the estimated variances and hence lengths of the interval estimates. 
For small values of $\epsilon$, the differentially private  95\% intervals are  wider and have larger coverage rates. As $\epsilon$ increases, the coverage rates become closer to nominal. 

\begin{figure}[t]
  \centering
   \subfloat[]{\includegraphics[height=7cm,width=8cm]{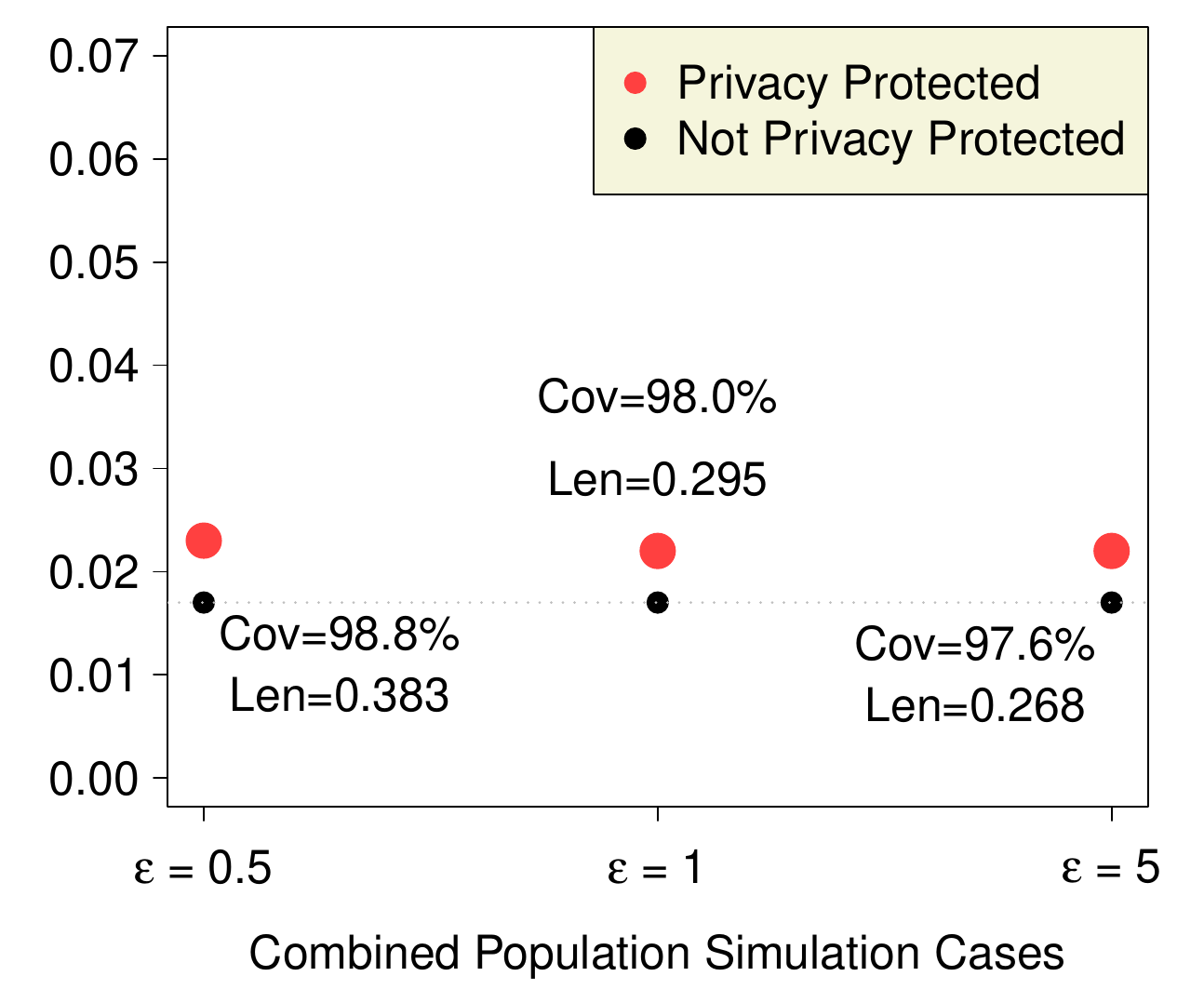}\label{sim-epsilon1}}
   \subfloat[]{\includegraphics[height=7cm,width=8cm]{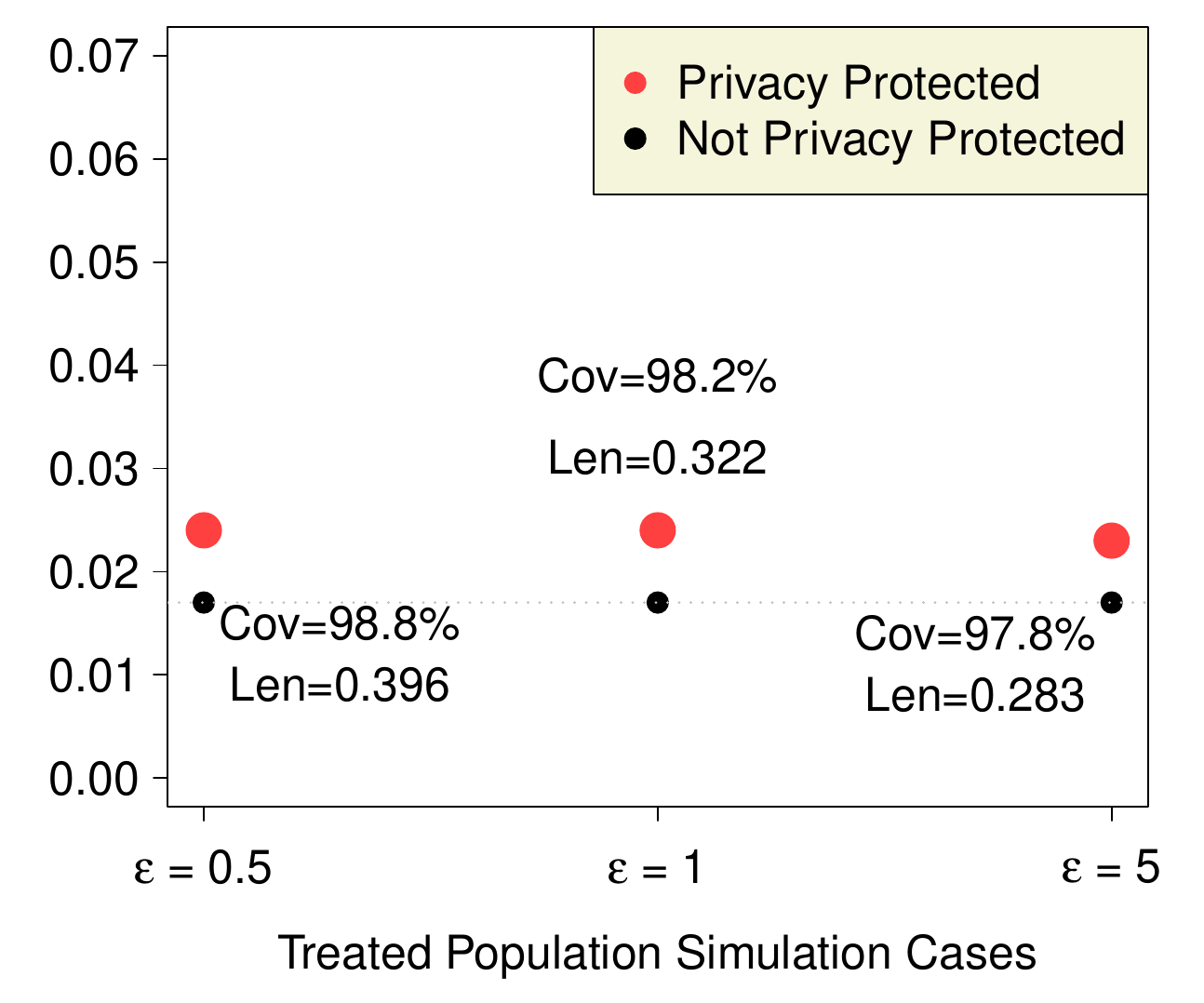}\label{sim-epsilon2}}\\
   \subfloat[]{\includegraphics[height=7cm,width=8cm]{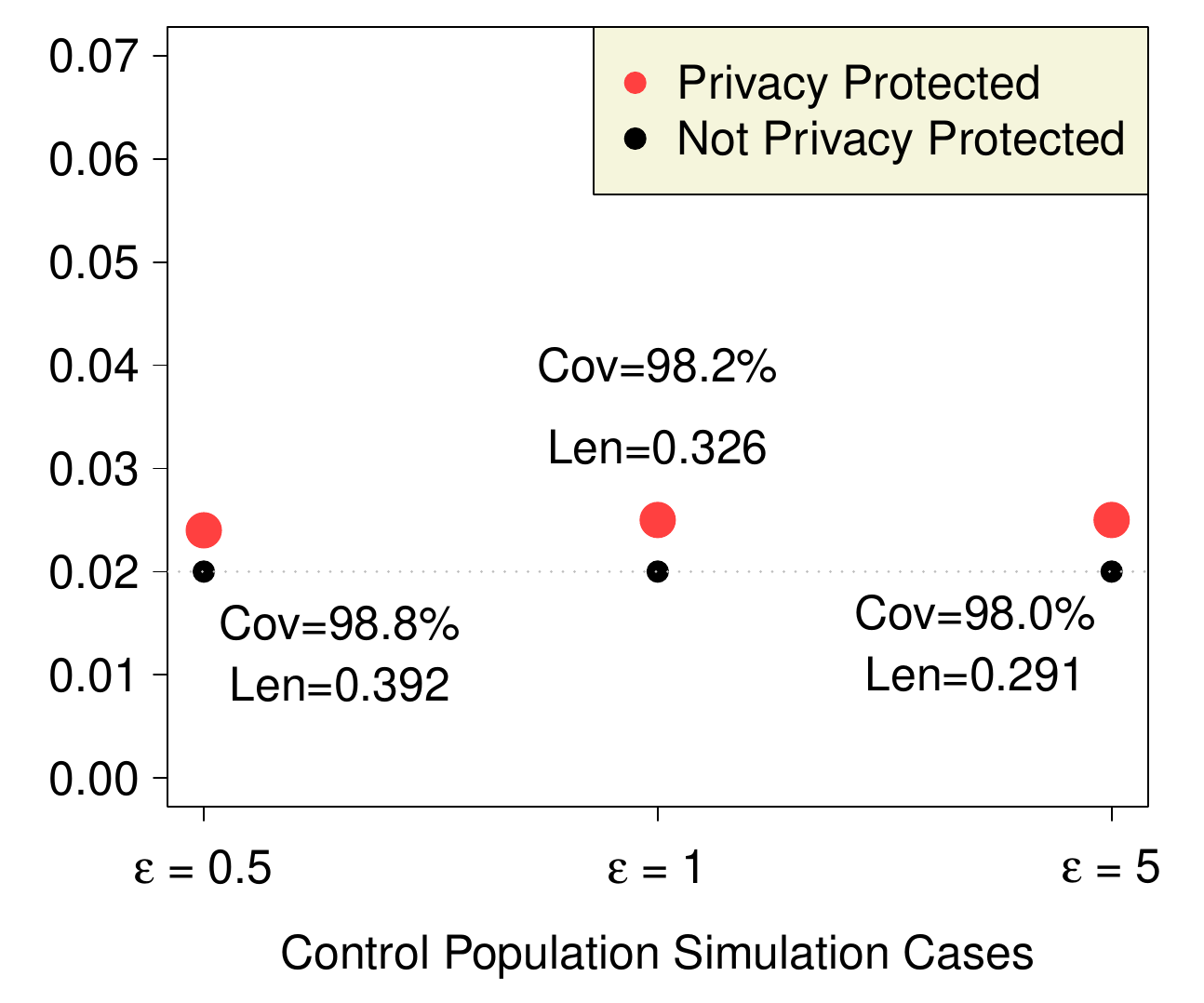}\label{sim-epsilon3}}
   \caption{Root mean squared error (RMSE) of the differentially private  WATEs for the ATE (Figure~\ref{sim-epsilon1}), ATT  (Figure~\ref{sim-epsilon2}), and ATC (Figure~\ref{sim-epsilon3}) for $\epsilon \in \{0.5, 1, 5\}.$  In all these cases,  $M = 100$, $n=10000$, and $a=0.05$.   RMSEs for the WATE estimates with no privacy protection are denoted by black dots.  RMSEs for the differentially private WATE estimates  are denoted by red dots.  ``Cov'' and ``Len'' stand for the coverage rate and average length of the 95\%  intervals, both based on the differentially private algorithms.}\label{ep-change}
\end{figure}

\section{Illustration with the Adult Income Data}\label{sec:real_data}

We demonstrate the application of the differentially private WATE estimators  using the ``Adult'' data set 
\citep{misc_adult_2}, which we accessed via the UCI Machine Learning Repository (\url{https://archive.ics.uci.edu/}). We emphasize that this example serves to illustrate the methodology and is not intended to be a thorough causal analysis of the effect of education on income.

The data comprise $n=30162$ individuals with complete information on the following variables: the age of the individual; the marital status, which includes seven categories - married to a civilian spouse, married to a spouse in the armed forces, married but with an absentee spouse, never married, divorced, separated, or widowed; the race, which includes five categories - white, black, American Indian Eskimo, Asian Pacific islander, and other; the sex, which includes two categories - male and female; the individual's occupation, which spans across 15 categories including executive-managerial, farming, fishing, transportation, sales, administrative-clerical, and more; and, an indicator of  whether their native country is the USA or not. We classify individuals as treated $(z = 1)$ if they have earned a bachelor's  degree or higher, and as controls $(z = 0)$ if they have an education level lower than a bachelor's degree. We make a binary outcome from income as below \$50000  ($y=0$) or at least \$50000 ($y=1$).

\begin{table}[t]
\begin{center}
\begin{tabular}
[c]{cccccc}
\hline
& $\epsilon$ & 
ATE & 
ATT &  ATC \\
&    &  95\% CI &  95\% CI &  95\% CI\\
\hline
 $\tilde{\tau}^{\epsilon}$  & 1  & 0.271  & 0.258 & 0.236\\
&    &  (0.183, 0.361) & (0.170, 0.344) & (0.149, 0.324)\\
&  0.5 & 0.272  &  0.269 & 0.275 \\
 &  &  (0.144, 0.401) &  (0.136, 0.406) & (0.151, 0.407)\\
$\hat{\tau}$  & &   \multicolumn{1}{c}{0.263} &  \multicolumn{1}{c}{0.271} &  \multicolumn{1}{c}{0.260}  \\
   & &   \multicolumn{1}{c}{(0.251, 0.275)} &  \multicolumn{1}{c}{(0.259, 0.283)} &  \multicolumn{1}{c}{(0.248, 0.272)} \\
\hline
\end{tabular}
\caption{Point estimates and 95\% intervals for treatment effects from the analysis of the Adult Income Data described in Section \ref{sec:real_data}. Results include the differentially private inferences based on Algorithm \ref{alg1} in the panel indicated by $\tilde{\tau}^{\epsilon}$, as well as results based on the full data with no privacy protection or truncation in the panel headed by $\hat{\tau}$.  Differentially private results use $M=100$ and $a= 0.05$.}
\label{Tab_real_Education}
\end{center}
\end{table}

We estimate propensity scores using a logistic regression of $z$ on {age, marital status, race, sex, occupation, and a binary indicator denoting whether the individual's native country is the USA}. We use Algorithm \ref{alg1} to make differentially private inferences about the effect of education on income.   
Given the relatively large sample size, for this analysis, we set $M=100$.  This $M$ value is sufficiently large enough that the Laplace noise variance is expected to be less than the within-partition variance of the WATE.  We set $a= 0.05$ and $\pi=0.5$.

Table~\ref{Tab_real_Education} presents results for two values of $\epsilon\in\{0.5,1\}$. 
For both the differentially private WATE estimates and their non-privacy protected counterparts, the point estimates are positive and 95\% confidence intervals exclude zero. These suggest a positive association between years of education and income level. The differentially private WATE estimates and their non-privacy protected analogues are close to one another. The 95\% intervals constructed from Algorithm \ref{alg1} are wider than the corresponding intervals based on the full data.

\section{Concluding Remarks}\label{sec:conc}
We present an  approach to estimating weighted average treatment effects with binary outcomes while ensuring differential privacy.  Simulation and empirical results suggest that the approach can result in accurate point estimates with conservative interval coverage rates. To implement the approach, analysts need to select $M$ and $a$. We now suggest some rule-of-thumb guidance for these choices.  We note that analysts can undertake simulation studies akin to those presented here to facilitate choices tuned more closely to their settings.

In determining appropriate values for $M$, we recommend analysts consider three desiderata.  First, they 
should aim to achieve an acceptable margin of error $\delta$  (i.e., two times standard error) for the treatment effect estimate of interest.
  To illustrate, consider ${\tau}_{ATE}$. Absent privacy concerns, the approximate standard error of $\bar{\tau}^T$ for the ATE is  expected not to exceed $\sqrt{1/(2an)}.$ Similarly, the standard deviation of independently drawn noise from the Laplace distribution with sensitivity $2/M$  and privacy budget $\epsilon(1-\pi)$ is $2/(M\epsilon(1-\pi))$.  We can solve for the positive $M$ that makes $\delta \approx 2\sqrt{1/(2an) + 4/(M\epsilon(1-\pi))^2}$, resulting in the formula
  \begin{equation}
  M \approx \left(\frac{2}{\epsilon(1-\pi)}\right) \left(\frac{1}{\sqrt{\delta^2/4 - 1/(2an)}}\right).\label{Meq}
  \end{equation}
In cases where $n$ is large enough that $1/(2an)$ does not contribute meaningfully to the computation in \eqref{Meq}, we can simplify this as $M \approx 4/(\epsilon(1-\pi)\delta)$.  For example, in the ADULTS data analysis with $n=30162$, $a=.05$ (USE ,10??), and $\epsilon(1-\pi) = 0.5$, for a desired $MoE = .10$, we solve for $M\approx80$.  Second, analysts should check that $M$ is small enough that it is unlikely to generate a set of random partitions where one or more has too few treated or control cases to estimate the propensity scores; see below.  Finally, the analyst should make $M$ large enough that results are expected to be reasonably stable over different runs of the algorithm on the data (although only one run with $\bD$ is expected to minimize privacy loss). We propose using a minimum of $M=50$ when possible.  
In general, we recommend selecting $a$ to be sufficiently small to minimize the likelihood of truncating propensity scores, yet not so small as to inflate the sensitivity of the variance estimators.  In essence, $a$ can be adjusted to ensure that the sensitivity of the variance estimator is  acceptable, particularly for the values of $M$ under consideration. As a default, we suggest setting $a=0.1$. This value is recommended by \cite{crump2009dealing} and empirically supported by \cite{sturmer} as a default value for trimming propensity scores outside the privacy context. 

As noted in Section \ref{sec:DPwate}, it is theoretically possible for the partitioning to result in failures, for example, in one or more partitions all records have $z_i=1$. When this occurs one may not be able to estimate the propensity scores or the treatment effects. If the expectation is that such partition assignments are rare---as can be expected in large samples with ample treated and control cases---we suggest re-doing the partitioning with the expectation of sampling a more suitable set of partitions.  If the expectation is that this is not rare, one possibility is to set the undefined $\hat{\tau}_m^{T}$ equal to random draws from the uniform distribution over $[-1, 1]$ and the undefined $\hat{V}_m^{T}$ equal to random draws from the uniform distribution over 0 to the appropriate  sensitivity from Theorem \ref{th:variance}.  We conjecture that this should preserve the privacy guarantee without obliterating the usefulness of the inferences.  Investigating this conjecture is an area for future work.

\section{Appendix}
This section presents proofs of the theoretical results outlined in Section~\ref{theory}.

\subsection{Proof of Lemma \ref{lemma_DP}}

We begin by proving part (i) of the lemma. Note that, 
\begin{align}\label{theory_eq1}
&P(|\bar{\tau}-\tau|>c)
\leq P(|\bar{\tau}-\bar{\tau}^{T,\epsilon}|>c/3)\nonumber\\
&\qquad\qquad\qquad\qquad+P(|\bar{\tau}^{T}-\bar{\tau}^{T,\epsilon}|>c/3)
+P(|\bar{\tau}^{T}-\tau|>c/3).
\end{align}
The first and second terms 
correspond to probabilities under the Laplace distribution.  More specifically,
\begin{align}\label{theory_eq2}
&P(|\bar{\tau}^{T}-\bar{\tau}^{T,\epsilon}|>c/3)=E_{y,\bx,z}P(|\bar{\tau}^{T}-\bar{\tau}^{T,\epsilon}|>c/3|\bD)
=\exp(-M\epsilon(1-\pi)c/6)\nonumber\\
&P(|\bar{\tau}-\bar{\tau}^{T,\epsilon}|>c/3)=E_{y,\bx,z}P(|\bar{\tau}-\bar{\tau}^{T,\epsilon}|>c/3|\bD)\leq\exp(-M\epsilon(1-\pi)c/6).
\end{align}
It remains to show the bound for $P(|\bar{\tau}^{T}-\tau|>c/3)$. To this end, note that
\begin{align}\label{differential_theory}
&(c^2/9) P(|\bar{\tau}^{T}-\tau|>c/3)
\leq E_{y,\bx,z}[(\bar{\tau}^{T}-\tau)^2]
\leq 2 Var(\bar{\tau}^{T})+2\{E_{y,\bx,z}[\bar{\tau}^{T}]-\tau\}^2\nonumber\\
&=(2/M^2)\sum_{m=1}^M Var(\hat{\tau}_{m}^T)+2\{(1/M)\sum_{m=1}^ME_{y,\bx,z}[\hat{\tau}_{m}^T]-\tau\}^2.
\end{align}
Let $\bD_{u,m}, \bD_{l,m}, \bD_{T,m}$ denote the samples within the $m$th partition $\bD_m$ with propensity score greater than $1-a$, less than $a$ and between $a$ and $(1-a)$, respectively.
Now observe that 
\begin{align}\label{eq:three_terms}
&\frac{1}{n_{m}}\sum_{i=1}^{n_m}w_{1i}z_iy_i(1)   
=\frac{1}{n_{m}}\sum_{i\in \bD_{T,m}}w_{1i}z_iy_i(1)+\frac{1}{n_{m}}\sum_{i\in\bD_{l,m}}w_{1i}z_iy_i(1)+\frac{1}{n_{m}}\sum_{i\in\bD_{u,m}}w_{1i}z_iy_i(1)\nonumber\\
&=\frac{1}{n_{m}}\sum_{i\in \bD_{T,m}}\frac{t(\bx_i)}{e^T(\bx_i)}z_iy_i(1)+\frac{1}{n_{m}}\sum_{i\in\bD_{l,m}}\frac{t(\bx_i)}{e^T(\bx_i)}z_iy_i(1)+\frac{1}{n_{m}}\sum_{i\in\bD_{u,m}}\frac{t(\bx_i)}{e^T(\bx_i)}z_iy_i(1)\nonumber\\
&=\frac{1}{n_{m}}\sum_{i\in \bD_{T,m}}\frac{t(\bx_i)}{e(\bx_i)}z_iy_i(1)+\frac{1}{n_{m}a}\sum_{i\in\bD_{l,m}}t(\bx_i)z_iy_i(1)+\frac{1}{n_{m}(1-a)}\sum_{i\in\bD_{u,m}}t(\bx_i)z_iy_i(1).
\end{align}

By assumptions (A1) and (A2),
\begin{align}\label{WATE_proof_eq3}
&\frac{1}{n_{m}}\sum_{i\in\bD_{T,m}}w_{1i}z_iy_i(1)\stackrel{a.s.}{\rightarrow}
E_{y,\bx,z}[y(1)zt(\bx)/e(\bx)]=E_{\bx}E_{z|\bx}E_{y|z,\bx}[y(1)zt(\bx)/e(\bx)]\nonumber\\
&=E_x[E[y(1)|\bx]t(\bx)]=\int E[(y(1)|\bx]t(\bx)f(\bx)\Delta(d\bx).
\end{align}
Since $|t(\bx_i)z_iy_i(1)|\leq 1$, for all $i$,
$\frac{1}{n_{m}a}\sum_{i\in\bD_{l,m}}t(\bx_i)z_iy_i(1)\stackrel{a.s.}{\rightarrow} 0$ if $n_{l,m}$ is finite, and\\ $\frac{1}{n_{m}(1-a)}\sum_{i\in\bD_{u,m}}t(\bx_i)z_iy_i(1)\stackrel{a.s.}{\rightarrow} 0$ if $n_{u,m}$ is finite. When both $n_{l,m}$ and $n_{u,m}$ are infinite, 
\begin{align}\label{eq:second_term}
\frac{1}{n_{m}a}\sum_{i\in\bD_{l,m}}t(\bx_i)z_iy_i(1)=\frac{n_{l,m}}{n_{m}}\frac{1}{n_{l,m}a}\sum_{i\in\bD_{l,m}}t(\bx_i)z_iy_i(1)\stackrel{a.s.}{\rightarrow} 0, 
\end{align}
as $\frac{1}{n_{l,m}}\sum_{i\in\bD_{l,m}}t(\bx_i)z_iy_i(1)\stackrel{a.s.}{\rightarrow} \int E[(y(1)|\bx]e(\bx)t(\bx)f(\bx)\Delta(d\bx)$ and $\frac{n_{l,m}}{n_{m}}\rightarrow 0$ by (A2). Using the similar logic, 
\begin{align}\label{eq:third_term}
\frac{1}{n_{m}}\sum_{i\in\bD_{u,m}}t(\bx_i)z_iy_i(1)\stackrel{a.s.}{\rightarrow} 0. 
\end{align}
From (\ref{eq:three_terms}), (\ref{WATE_proof_eq3}), (\ref{eq:second_term}) and (\ref{eq:third_term}), we have
\begin{align}\label{eq:first_term_full} 
\frac{1}{n_{m}}\sum_{i=1}^{n_m}t(\bx_i)z_iy_i(1)\rightarrow \int E[(y(1)|\bx]t(\bx)f(\bx)\Delta(d\bx).
\end{align}

Also,
\begin{align}
\frac{1}{n_{m}}\sum_{i=1}^{n_{m}}w_{0i}(1-z_i)y_i(0)
&=\frac{1}{n_{m}}\sum_{i\in \bD_{T,m}}w_{0i}(1-z_i)y_i(0)+\frac{1}{n_{m}}\sum_{i\in\bD_{l,m}}w_{0i}(1-z_i)y_i(0)\nonumber\\
&+\frac{1}{n_{m}}\sum_{i\in\bD_{u,m}}w_{0i}(1-z_i)y_i(0).    
\end{align}
Following the similar arguments as above,
\begin{align}\label{WATE_proof_eq1}
&\frac{1}{n_{m}}\sum_{i\in \bD_{T,m}}w_{0i}(1-z_i)y_i(0)\stackrel{a.s.}{\rightarrow}
E_{y,\bx,z}[y(0)(1-z)t(\bx)/(1-e(\bx))]
=\int E[(y(0)|\bx]t(\bx)f(\bx)\Delta(d\bx).\nonumber\\
&\frac{1}{n_{m}}\sum_{i\in\bD_{l,m}}w_{0i}(1-z_i)y_i(0)\stackrel{a.s.}{\rightarrow} 0,\:\:\:
\frac{1}{n_{m}}\sum_{i\in\bD_{u,m}}w_{0i}(1-z_i)y_i(0)\stackrel{a.s.}{\rightarrow} 0
\end{align}
We use the same argument as above to arrive at
\begin{align}\label{WATE_proof_eq2}
&\frac{1}{n_{m}}\sum_{i=1}^{n_{m}}w_{1i}z_i\stackrel{a.s.}{\rightarrow}
E_{y,\bx,z}[zt(\bx)/e(\bx)]=E_x[t(\bx)]=\int t(\bx)f(\bx)\Delta(d\bx)\nonumber\\
&\frac{1}{n_{m}}\sum_{i=1}^{n_{m}}w_{0i}(1-z_i)\stackrel{a.s.}{\rightarrow}
E_{y,\bx,z}[(1-z)t(\bx)/(1-e(\bx))]=E_x[t(\bx)]=\int t(\bx)f(\bx)\Delta(d\bx).
\end{align}
Hence,
$\hat{\tau}_{m}^T\stackrel{a.s.}{\rightarrow}\left[\int \{E[y(1)|\bx]-E[y(0)|\bx]\} t(\bx)f(\bx)\Delta(d\bx)\right]/\left[\int t(\bx)f(\bx)\Delta(d\bx)\right]=\tau$ from (\ref{eq:first_term_full}), (\ref{WATE_proof_eq1}) and (\ref{WATE_proof_eq2}). Given that each $\hat{\tau}_m$ is bounded between $-1$ to $1$, dominated convergence theorem leads to $E[\hat{\tau}_{m}^T]\rightarrow\tau$, as $n\rightarrow\infty$.
Following the proof of Theorem 2 in \cite{li2018balancing}, we have
\begin{align*}
n_m Var_{y|\bx,z}(\hat{\tau}_m^T)&=\frac{\frac{1}{n_m}\sum_{i=1}^{n_m}v_1(\bx_i)z_iw_{1i}^2}{[\frac{1}{n_m}\sum_{i=1}^{n_m}z_iw_{1i}]^2}+\frac{\frac{1}{n_m}\sum_{i=1}^{n_m}v_0(\bx_i)(1-z_i)w_{0i}^2}{[\frac{1}{n_m}\sum_{i=1}^{n_m}(1-z_i)w_{0i}]^2}\nonumber\\
 &=\frac{\frac{1}{n_m}\sum_{i=1}^{n_m}v_1(\bx_i)z_it(\bx_i)^2/e^T(\bx_i)^2}{[\frac{1}{n_m}\sum_{i=1}^{n_m}z_it(\bx_i)/e^T(\bx_i)]^2}+\frac{\frac{1}{n_m}\sum_{i=1}^{n_m}v_0(\bx_i)(1-z_i)t(\bx_i)^2/(1-e^T(\bx_i))^2}{[\frac{1}{n_m}\sum_{i=1}^{n_m}(1-z_i)t(\bx_i)/(1-e^T(\bx_i))]^2}.
\end{align*}
Note that 
\begin{align*}
\frac{1}{n_m}\sum_{i=1}^{n_m}v_1(\bx_i)z_i\frac{t(\bx_i)^2}{e^T(\bx_i)^2}
&=\frac{1}{n_m}\sum_{i\in\bD_{T,m}}v_1(\bx_i)z_i\frac{t(\bx_i)^2}{e(\bx_i)^2}
+\frac{1}{n_ma^2}\sum_{i\in\bD_{l,m}}v_1(\bx_i)z_it(\bx_i)^2\\
&+\frac{1}{n_m(1-a)^2}\sum_{i\in\bD_{u,m}}v_1(\bx_i)z_it(\bx_i)^2.
\end{align*}
By assumptions (A1) and (A2), and using the arguments used before,
$\frac{1}{n_m}\sum_{i\in\bD_{l,m}}v_1(\bx_i)z_it(\bx_i)^2\stackrel{a.s.}{\rightarrow} 0$ and $\frac{1}{n_m}\sum_{i\in\bD_{u,m}}v_1(\bx_i)z_it(\bx_i)^2\stackrel{a.s.}{\rightarrow} 0$, and 
$E_{\bx,z}[\frac{1}{n_m}\sum_{i\in\bD_{T,m}}v_1(\bx_i)z_i\frac{t(\bx_i)^2}{e(\bx_i)^2}]\rightarrow \int t(\bx)^2\frac{v_1(\bx)}{e(\bx)}f(\bx)\Delta(\bx)$.

\noindent Similarly,
\begin{align*}
\frac{1}{n_m}\sum_{i=1}^{n_m}v_0(\bx_i)\frac{(1-z_i)t(\bx_i)^2}{(1-e^T(\bx_i))^2}
&=\frac{1}{n_m}\sum_{i\in\bD_{T,m}}v_1(\bx_i)\frac{(1-z_i)t(\bx_i)^2}{(1-e^T(\bx_i))^2}
+\frac{1}{n_m(1-a)^2}\sum_{i\in\bD_{l,m}}v_1(\bx_i)(1-z_i)t(\bx_i)^2\\
&+\frac{1}{n_ma^2}\sum_{i\in\bD_{u,m}}v_1(\bx_i)(1-z_i)t(\bx_i)^2.
\end{align*}
Hence,
$\frac{1}{n_m}\sum_{i\in\bD_{l,m}}v_0(\bx_i)(1-z_i)t(\bx_i)^2\stackrel{a.s.}{\rightarrow} 0$ and $\frac{1}{n_m}\sum_{i\in\bD_{u,m}}v_0(\bx_i)(1-z_i)t(\bx_i)^2\stackrel{a.s.}{\rightarrow} 0$, and 
$E_{\bx,z}[\frac{1}{n_m}\sum_{i\in\bD_{T,m}}v_0(\bx_i)(1-z_i)\frac{t(\bx_i)^2}{(1-e(\bx_i))^2}]\rightarrow \int t(\bx)^2\frac{v_0(\bx)}{1-e(\bx)}f(\bx)\Delta(\bx)$. Using similar arguments,
\begin{align*}
 \frac{1}{n_m}\sum_{i=1}^{n_m}z_it(\bx_i)/e(\bx_i)\rightarrow \int t(\bx)f(\bx)\Delta(\bx),\:\:\frac{1}{n_m}\sum_{i=1}^{n_m}(1-z_i)t(\bx_i)/(1-e(\bx_i))\rightarrow \int t(\bx)f(\bx)\Delta(\bx).  
\end{align*}
Using Slutsky's theorem,
\begin{align*}
n_m E_{\bx}[Var(\hat{\tau}_{m}^T|\bx)]
\rightarrow\int t(\bx)^2\left\{\frac{v_1(\bx)}{e(\bx)}+\frac{v_0(\bx)}{(1-e(\bx))}\right\}f(\bx)\Delta(\bx)/\left\{\int t(\bx)f(\bx)\Delta(\bx)\right\}^2.
\end{align*}
As $n_{m}\rightarrow\infty$ $E_{\bx}[Var(\hat{\tau}_{m}^T|\bx)]\rightarrow 0$. Following \cite{imbens2004nonparametric}, typically, $Var_{\bx}(E[\hat{\tau}_{m}^T|\bx)])\leq E_{\bx}[Var(\hat{\tau}_{m}^T|\bx)]\rightarrow 0$. Hence, 
$Var_{y,\bx,z}(\hat{\tau}_{m}^T)\rightarrow 0$. 

Combining equations (\ref{theory_eq1}), (\ref{theory_eq2}), (\ref{differential_theory}) and the result above, as $n\rightarrow\infty$
\begin{align}
P(|\bar{\tau}-\tau|>c)\leq 2\exp(-M\epsilon(1-\pi)c/6).
\end{align}
which concludes the proof of (i).

To prove (ii), note that
\begin{align*}
& P(|\bar{V}-V|>c)\leq P(|\bar{V}-\bar{V}^{T,\epsilon}|>c/3)\nonumber\\
&\qquad\qquad\qquad\qquad+P(|\bar{V}^{T}-\bar{V}^{T,\epsilon}|>c/3)
+P(|\bar{V}^{T}-V|>c/3).   
\end{align*}

The first and second terms are straightforward to bound following the Laplace distribution. More specifically,
\begin{align}\label{theory_eq_var}
&P(|\bar{V}^{T}-\bar{V}^{T,\epsilon}|>c/3)=E_{y,\bx,z}P(|\bar{V}^{T}-\bar{V}^{T,\epsilon}|>c/3|\bD)
=\exp(-Mc\epsilon\pi/6)\nonumber\\
&P(|\bar{V}-\bar{V}^{T,\epsilon}|>c/3)=E_{y,\bx,z}P(|\bar{V}-\bar{V}^{T,\epsilon}|>c/3|\bD)\leq\exp(-Mc\epsilon\pi/6).
\end{align}
Regarding the third term, note that $\bar{V}^T=\sum_{m=1}^M\hat{V}_{m}^T/M$ and 
$\hat{V}_{m}^T\stackrel{P}{\rightarrow} V$ as $n_{m}\rightarrow \infty$, using the above results. Hence, $\bar{V}^T\stackrel{P}{\rightarrow} V$ which implies
$P(|\bar{V}^{T}-V|>c/3)\rightarrow 0$ as $n\rightarrow\infty$. Hence, 
$P(|\bar{V}^{T}-V|>c/3)\leq 2 \exp(-M\epsilon\pi c/6)$ as $n\rightarrow\infty$, proving (ii).

\subsection{Proof of Theorem \ref{theorem_DP}}

We have 
\begin{align*}
 KL(\tilde{g}^{\epsilon},g)=\frac{(\bar{\tau}-\tau)^2}{2V}+\frac{\bar{V}}{2V}-\frac{1}{2}-\frac{1}{2}\log\frac{\bar{V}}{V}=U_1+U_2+U_3,
\end{align*}
where $U_1=\frac{(\bar{\tau}-\tau)^2}{2V}$, $U_2=\frac{\bar{V}}{2V}-\frac{1}{2}$ and $U_3=-\frac{1}{2}\log\frac{\bar{V}}{V}$.
Following Lemma~\ref{lemma_DP}, for any $c>0$, as $n\rightarrow\infty$,
\begin{align}
& P(|U_1|>c/3)=P\left(\frac{(\bar{\tau}-\tau)^2}{2V}>c/3\right)
\leq 2\exp\left(-\frac{M\epsilon(1-\pi)\sqrt{2Vc}}{6\sqrt{3}}\right)\\
& P(|U_2|>c/3)=P\left(\Big|\frac{\bar{V}}{2V}-\frac{1}{2}\Big|>c/3\right)
\leq 2\exp(-M\epsilon\pi V c/9)\\
& P(|U_3|>c/3)=P\left(\Big|\frac{1}{2}\log\frac{\bar{V}}{V}\Big|>c/3\right)\nonumber\\
&\qquad\qquad\qquad\qquad\leq P\left(\Big|\frac{\bar{V}}{2V}-\frac{1}{2}\Big|>c/3\right)
\leq 2\exp(-M\epsilon\pi V c/9),
\end{align}
where last inequality uses the fact that $log(h)\leq h-1$, for any $h>0$.
Finally, for any $c>0$,
\begin{align*}
P(KL(\tilde{g}^{\epsilon},g)>c)&\leq 
P(|U_1|>c/3)+P(|U_2|>c/3)+P(|U_3|>c/3)\\
&\leq 2\exp\left(-\frac{M\epsilon(1-\pi)\sqrt{2Vc}}{6\sqrt{3}}\right)+4\exp(-M\epsilon\pi V c/9),
\end{align*}
as $n\rightarrow\infty$, proving the result. 

\bibliographystyle{natbib}
\bibliography{references}
\end{document}